\pgfplotsset{compat=1.18}
\definecolor{mygreen}{RGB}{20,180,20}
\DeclareMathOperator*{\argmax}{arg\,max} 
\newcommand{\1}[1]{\mathds{1}_{\left[#1\right]}}
\def\eg{{\em e.g.}}
\def\ie{{\em i.e.}}
\newtheorem{theorem}{Theorem}
\newtheorem{definition}[theorem]{Definition}
\title{Calibrated Recommendations for Users with Decaying Attention}
\author{
    Jon Kleinberg\thanks{Supported in part by a Vannevar Bush Faculty Fellowship, MURI grant W911NF-19-0217, AFOSR grant FA9550-19-1-0183, a Simons Collaboration grant, and a grant from the MacArthur Foundation.}\\
    Cornell University\\
    \texttt{kleinberg@cornell.edu}
    \and
    Emily Ryu\thanks{Supported in part by AFOSR grant FA9550-19-1-0183.}\\
    Cornell University\\
    \texttt{eryu@cs.cornell.edu}
    \and
    \'Eva Tardos\thanks{Supported in part by NSF grants CCF-1408673, CCF-1563714 and AFOSR grant FA9550-19-1-0183 and FA9550-23-1-0068}\\
    Cornell University\\
    \texttt{eva.tardos@cornell.edu}
}
\begin{document}

\maketitle

\begin{abstract}
    % !TEX root = main.tex
%\documentclass[../main.tex]{subfiles}

%\begin{document}
There are many settings, including ranking and recommendation of content, where it is important to provide diverse sets of results,
with motivations ranging from fairness to novelty and other aspects of optimizing user experience. One form of diversity of recent interest is \emph{calibration}, the notion that personalized recommendations should reflect the full distribution of a user's interests, rather than a single predominant category --- for instance, a user who mainly reads entertainment news but also wants to keep up with news on the environment and the economy would prefer to see a mixture of these genres, not solely entertainment news. Existing work has formulated calibration as a subset selection problem; this line of work observes that the formulation requires the unrealistic assumption that all recommended items receive equal consideration from the user, but leaves as an open question the more realistic setting in which user attention decays as they move down the list of results. 

In this paper, we consider calibration with decaying user attention under two different models. In both models, there is a set of underlying genres that items can belong to. In the first setting, where items are coarsely binned into a single genre each, we surpass the $(1-1/e)$ barrier imposed by submodular maximization and provide a novel bin-packing analysis of a $2/3$-approximate greedy algorithm. In the second setting, where items are represented by fine-grained mixtures of genre percentages, we provide a $(1-1/e)$-approximation algorithm by extending techniques for constrained submodular optimization. Our work thus addresses the problem of capturing ordering effects due to decaying attention, allowing for the extension of near-optimal calibration from recommendation \emph{sets} to recommendation \emph{lists}.

%\end{document}
\end{abstract}

% Paper body
% \documentclass[../main.tex]{subfiles}
% % \graphicspath{{\subfix{../images/}}}

% \begin{document}
% !TEX root = main.tex

\section{Introduction} \label{sec:intro}

Recommendation systems, now a ubiquitous feature of online platforms, have also been a long-standing source of fundamental theoretical problems in computing. Based on a model derived from a user's past behavior, 
such systems suggest relevant pieces of content that they predict the user 
is likely to be interested in. This is typically achieved by optimizing for an objective function based on a model of the user's interests (such as relevance or utility), and such questions lead to a number of interesting optimization questions.  Often, these basic formulations try to capture relevance in aggregate without considering the diversity of the results produced; they also generally treat lists of recommended results as an unordered sets, while in reality they are more accurately ordered sequences (reflecting the key influence of position and rank on the amount of attention a piece of content receives).
Considering these two directions in conjunction leads to new and interesting theoretical questions, which form the focus of this paper. 

In particular, a recurring concern with algorithmic recommendations is that
the process of optimizing for relevance risks producing results that
are too homogeneous; it can easily happen that all the most relevant
pieces of content are similar to one another, and that they collectively
correspond to only one facet of a user's interests at the expense of
other facets that go unrepresented~\cite{McNee2006}.
To address such concerns, a long-standing research paradigm seeks 
recommendation systems whose results 
are not only relevant but also {\em diverse}, reflecting the range
of a user's interests.
Explicitly pursuing diversity in recommendations has been seen 
as a way to help mitigate the homogenizing effects that might otherwise occur~\cite{agrawal2009,ashkan2015}.

\paragraph*{Calibrated recommendations}
Within this area, an active line of research has pursued
{\em calibration} as a means of optimizing for diversity \cite{Steck18}. 
In this formalism, we want to present a list of $k$ recommended items
to a single user (\eg, movies on an entertainment site,
or articles on a news site),
and there is a set of underlying {\em genres} that the items belong to.
The user has a {\em target distribution} over genres that
reflect the extent to which they want to consume each genre
in the long run.
A natural goal is that the average distribution induced by the
list of recommendations should be ``close,'' or \emph{calibrated}, to the user's target distribution.
(For example, a user who likes both documentaries and movies about
sports might well be dissatisfied with recommendations that were always
purely about sports and contained no documentaries;
this set of recommendations would be badly calibrated to the
user's target distribution of genres.)

In a user study that systematically varied the quality of results from a recommender system, the researchers reported significant differences in users' evaluations of the system based on the quality of results over a single session lasting only approximately 15 minutes \cite{userstudy}. Considering calibration an important aspect of quality (as asserted by multiple papers including \cite{cal2,kowald2023study}), it is thus critical to achieve calibration within each single session rather than to simply hope for recommendations to eventually ``average out'' in the long run, lest the user become so dissatisfied after a sufficiently miscalibrated session that they decide to abandon the system altogether.

Prior work by \cite{Steck18} showed that for 
natural measures of distributional similarity, the selection of a
set of $k$ items to match the user's target distribution can be
formulated as the maximization of a submodular set function.
Because of this, the natural greedy algorithm produces a set of $k$ items
whose distributional similarity to the user's target distribution
is within a $(1 - 1/e)$ factor of optimal.
In this way, the work provided an approximately optimal calibrated \emph{set} of recommendations.

\paragraph*{Decaying attention}
This same work observed a key limitation at the heart of
approximation algorithms for this and similar objectives:
it necessarily treats the $k$ recommendations as a set, for which
the order does not matter.
In contrast, one of the most well-studied empirical regularities
in the social sciences is decaying attention as a user reads through a list of results. Results at the top of a list get much more attention than
results further down --- this phenomenon has been documented not only through traditional content engagement metrics, but also directly through eye-tracking and other behavioral studies \cite{pan2007google,williamsZipf,NNGfold}.
Given this, the average genre distribution induced
by a list of $k$ recommendation results is really 
a {\em weighted average} over
the genres of these items, with the earlier items in the list
weighted more highly than the later ones.

Once we introduce the crucial property of
decaying user attention into the problem,
the formalism of set functions --- and hence of submodularity, which
applies to set functions --- is no longer available to us.
Moreover, it is no longer clear how to obtain algorithms for
provably near-optimal calibration.
There exist formalisms that extend the framework of submodular
functions, in restricted settings, to handle inputs that 
are ordered sequences \cite{Alaei19,Zhang16,Tschiatschek2017,Mitrovic18,Bernardini21,ec22seqsub}, but none of these formalisms can handle the setting of calibrated
recommendations with decaying attention that we have here.
It has thus remained an open question whether non-trivial approximation
guarantees can be obtained for this fundamental problem.

\paragraph*{The present work: Calibrated recommendations with decaying attention}
In this paper, we address this question by developing algorithms
that produce lists of recommendations with provably near-optimal
calibration for users with decaying attention.
We provide algorithms for two models of genres:
the \emph{discrete} model in which each item comes from a single genre, and
the \emph{distributional} model in which each item is described by a
distribution over genres. (For example, in this latter version,
a documentary about soccer in Italy is a multi-genre
mixture of a movie about sports, a movie about Italy, and a documentary.)

As noted above, a crucial ingredient in these models is to 
measure the similarity between two distributions: 
the user's desired target distribution over genres, and
the distribution of genres present in the results we show them.
\etreplace{In Section \ref{sec:calibration_overlap} w}{W}e make concrete what it means for these distributions to
be similar through the notion of an {\em overlap measure}, 
which we define in the paper to unify in a simple way
standard measures of distributional similarity.
Our results apply to a large collection of overlap measures 
including a large family of $f$-divergence measures with the 
property that similarities are always non-negative.
Overlap measures derived from the {\em Hellinger distance} are one
well-known measure in this family.
These were also at the heart of earlier
approaches that worked without decaying attention, where
these measures gave rise
to non-negative submodular set functions   
~\cite{calAMBM20,calNRAS22,calCD22}.\footnote{It is useful to note
that the KL-divergence --- arguably the other most widely-used
divergence along with the Hellinger distance --- is not naturally
suited to our problem, since it can take both positive and negative
values, and hence does not lead to well-posed questions about
multiplicative approximation guarantees.  This issue is not 
specific to models with decaying user attention;
the KL-divergence is similarly not well-suited to approximation
questions in the original unordered formalism, where the objective
function could be modeled as a set function.}

\paragraph*{Overview of results}
In our two \etreplace{models of genres}{genre models},\etcomment{I was afraid that two genre suggests that there are only 2 genres, maybe not a worry, but this is the phrase used earlier} we offer technical results of two distinct flavors \etedit{in Sections \ref{sec:discrete} and \ref{sec:distributional}}. First, the discrete genre model takes a completely new approach to the analysis of the greedy algorithm: to the best of our knowledge, our bin-packing argument is entirely novel; we also highlight that it allows us to surpass the barrier imposed by traditional submodularity arguments and achieve a stronger approximation guarantee.

For both versions of the problem, direct attempts at generalizing the methods of submodular maximization from unordered items to ordered items face a natural approximation barrier at $(1-1/e)$, simply because this is the strongest approximation guarantee we can obtain if we know only that the underlying function is submodular, and a special case of decaying attention is the case in which all weights are the same, which recovers the traditional submodular case. For the discrete version of the model, however, we are able to break through this $(1-1/e)$ barrier via a different technique based on a novel type of bin-packing analysis; through this approach, we are able to obtain a $2/3$-approximation to the optimal calibration for overlap measures based on the Hellinger distance. % In this case of discrete items, we thus already have a
% $(1-1/e)$-approximation, since any discrete instance is also
% an instance of the distributional model.
% And $(1-1/e)$ is the furthest that 
% any approach generalizing submodular maximization can go.
% But for the discrete version of the model, our bin-packing argument breaks through this $(1-1/e)$ barrier, obtaining a $2/3$-approximation
% to the optimal calibration for overlap measures based on the Hellinger distance.
We find this intriguing, since the problem is NP-hard and
amenable to submodular maximization techniques; but unlike other
applications of submodular optimization (including hitting sets
and influence maximization) where $(1-1/e)$ represents the 
tight bound subject to hardness of approximation,
here it is possible to go further by using a greedy algorithm combined
with a careful analysis in place of submodular optimization.

To do this, we begin by observing that the objective function over 
the ordered sequence of items selected 
satisfies a natural inequality that can be viewed as
an analogue of submodularity, but for functions defined on 
sequences rather than on sets.
We refer to this inequality as defining a property that we call
{\em ordered submodularity}, and we show that ordered submodularity
by itself guarantees that the natural greedy algorithm for
sequence selection (with repeated elements allowed, as in our
discrete problem) provides a $1/2$-approximation to the optimal sequence.

This bound of $1/2$ is not as strong as $1-1/e$; but unlike the
techniques leading to the $1-1/e$ bound, the bound coming
from ordered submodularity provides a direction along which we
are able to obtain an improvement.
In particular, for the discrete problem we can think of 
each genre as a kind of ``bin'' that contains items belonging
to this genre, and the problem of approximating a desired target
distribution with respect to the Hellinger measure then becomes
a novel kind of load-balancing problem across these bins.
Using a delicate local-search analysis, we are able to maintain
a set of inductive invariants over the execution of a greedy
bin-packing algorithm for this problem and show that it satisfies
a strict strengthening of the general ordered submodular inequality;
and from this, we are able to show that it maintains 
a $2/3$-approximation bound.

Subsequently, \etedit{in Section \ref{sec:distributional} for} the distributional genre model \etreplace{we build}{builds} on an existing line of work on constrained submodular maximization by introducing a new transformation technique to allow for position-based weights, which were not previously handled. A separate line of work has posed, but left open, the question of the effect of such position-based weights on achieving near-optimal diversity in recommender systems. Our work unites these two bodies of research by developing new methods from the former line of work to answer questions from the latter, and thereby provide a deeper fundamental understanding of the effects of weights and ordering on approximate submodular maximization. 

For the case of distributional genres, 
we begin by noting that if we were to make the unrealistic assumption
of repeated items (i.e. availability of many items with the exact same genre
distribution $q$), then we could apply a form of submodular optimization
with matroid constraints of \cite{calinescu2011} to obtain a $(1-1/e)$-approximation
to the optimal calibration with decaying attention.
This approach is not available to us, however, when we make the
more reasonable assumption that items each have their own specific genre distribution.
Instead, we construct a more complex laminar matroid structure, 
and we are able to show that with these more complex constraints,
a continuous greedy algorithm and pipage rounding produces a sequence of items within $(1 - 1/e)$ of optimal.

%\end{document}

% !TEX root = main.tex
% \documentclass[../main.tex]{subfiles}

% \begin{document}

\section{Related Work} \label{sec:relatedwork}
The problem of calibrated recommendations was defined by \cite{Steck18}, in which \emph{calibration} is proposed as a new form of diversity with the goal of creating recommendations that represent a user's interests. In this model, items represent distributions over genres, and weighting each item's distribution according to its rank induces a genre distribution for the entire recommendation list. Calibration is then measured using a \emph{maximum marginal relevance} objective function,  a modification of the KL divergence
from this induced distribution to the user's desired distribution of interests. In the case where all items are weighted equally, the maximum marginal relevance function is shown to be monotone and submodular, and thus $(1-1/e)$-approximable by the standard greedy algorithm. However, when items have unequal weights (such as with decaying user attention), the function becomes a sequence function rather than a set function, and the tools of submodular optimization can no longer be applied. Further, the use of
KL divergence with varying weights results in a mixed-sign objective function (refer to Appendix~\ref{sec:bad_netflix} for an example), meaning that formal approximation guarantees are not even technically well-defined in this setting. Hence, \cite{Steck18}'s approximation results are limited to only the equally-weighted (and therefore essentially unordered) case. 

Since then, there has been much recent interest in improving calibration in recommendation systems, via methods such as greedy selection using statistical divergences directly or other proposed calibration metrics~\cite{calNRAS22,calCD22} and LP-based heuristics~\cite{calSAM21}. However, this line of work largely focuses on empirical evaluation of calibration heuristics rather than approximation algorithms for provably well-calibrated lists. To the best of our knowledge, our work provides the first nontrivial approximation guarantees for calibration with unequal weights due to decaying attention. 

Within the recommendation system literature, there is a long history of modeling calibration and other diversity metrics as submodular set functions, and leaving open the versions where ordering matters because user engagement decays over the course of a list (\eg, \cite{agrawal2009,ashkan2015,Steck18}). Although numerous approaches to extending the notion of submodularity to have sequences have been proposed (\eg, \cite{Alaei19,Zhang16,Tschiatschek2017,Mitrovic18,Bernardini21,ec22seqsub,zhang2022ranking,udwani21order}), none is designed to handle these types of ordering effects. For a detailed survey of general theories of submodularity in sequences and a discussion of how they do not model our problem of calibration with decaying user attention, we refer the reader to Appendix~\ref{sec:extended_survey}.

%\end{document}
% \documentclass[../main.tex]{subfiles}

% \begin{document}
% !TEX root = main.tex

\section{Problem Statement and Overlap Measures} \label{sec:calibration_overlap}
\cite{Steck18} considers the problem of creating calibrated recommendations using the language of \emph{movies} as the items with which users interact, and \emph{genres} as the classes of items. Each user has a preference distribution over genres that can be inferred from their previous activity, and the goal is to recommend a list of movies whose genres reflect these preferences (possibly also incorporating a ``quality'' score for each movie, representing its general utility or relevance). In our work, we adopt \cite{Steck18}'s formulation of distributions over genres and refer to items as movies (although the problem of calibrated recommendations is indeed more general, including also news articles and other items, as discussed in the introduction). We describe the formal definition of our problem next.

\subsection{Item Genres and Genre of Recommendation Lists} \label{sec:genre_notation}
Consider a list of recommendations $\pi$ for a user $u$.
Let $p(g)$ be the distribution over genres $g$ preferred by the user (possibly inferred from previous history). Given our focus on a single user $u$, we keep the identity of the user implicit in the notation. For simplicity of notation, we will label the items as the elements of $[K]$, and say that item $i$ has genre distribution $q_i$. 
    
Following the formulation of \cite{Steck18}, we define the distribution over genres $q(\pi)$ of a recommendation list $\pi = \pi_1 \pi_2 \dots \pi_k$ as
    $\displaystyle{
    q(\pi)(g) \coloneqq 
    \sum_{j=1}^k w_j  \cdot q_{\pi_j}(g),
    }$
    where $w_j$ is the weight of the movie in position $j$, and we assume that the weights sum to 1: $\sum_{j=1}^k w_j = 1$.\footnote{Various weighting schemes are possible; \cite{Steck18} suggests that “Possible choices include the weighting schemes used in ranking metrics, like in Mean Reciprocal Rank (MRR) or normalized Discounted Cumulative Gain (nDCG).” Alternatively, given empirical measurements of attention decay such as in~\cite{pan2007google}, one might use numerically estimated weights.}
    Note that the \emph{position-based weights} make the position of each recommendation important, so this is no longer a subset selection problem.

To model attention decay, we will assume that the weights are weakly decreasing in rank (i.e., $w_a \ge w_b$ if $a < b$). We also assume that the desired length of the recommendation list is a fixed constant $k$. This assumption is without loss of generality, even with the more typical cardinality constraint that the list may have length \emph{at most} $k$ --- we simply consider each possible length $\ell \in [1,k]$, renormalize so that the first $\ell$ weights sum to $1$, and perform the optimization. We then take the maximally calibrated list over all $k$ length-optimal lists. 

The goal of the \emph{calibrated recommendations} problem is to choose $\pi$ such that $q(\pi)$ is ``close'' to $p$. To quantify closeness between distributions, we introduce the formalism of \emph{overlap measures}.

\subsection{Overlap Measures}
For the discussion that follows, we restrict to finite discrete probability spaces $\Omega$ for simplicity, although the concepts can be generalized to continuous probability measures.

A common tool for quantitatively comparing distributions is statistical divergences, which measure the ``distance'' from one distribution to another. A divergence $D$ has the property that $D(p,q) \ge 0$ for any two distributions $p,q$, with equality attained if and only if $p=q$. This means that divergences cannot directly be used to measure calibration, which we think of as a non-negative metric that is uniquely \emph{maximized} when $p=q$. Instead, we define a new but closely related tool that we call \emph{overlap}, which exactly satisfies the desired properties. 

Our definition is also more general in two important ways. First, we do not limit ourselves to the KL divergence, so that other divergences and distances with useful properties may be used (such as the Hellinger distance, $H(p,q) = \frac{1}{\sqrt{2}} ||\sqrt{p} - \sqrt{q} ||_2$, which forms a bounded metric and has a convenient geometric interpretation using Euclidean distance). Second, in our definition $q$ may be any \emph{subdistribution}, a vector of probabilities summing to \emph{at most} $1$. This is crucial because it enables the use of algorithmic tools such as the greedy algorithm -- which incrementally constructs $q$ from the $0$ vector by adding a new movie (weighted by its rank), and thus in each iteration must compute the overlap between the true distribution $p$ and the partially constructed subdistribution $q$.

\begin{definition}[Overlap measure]\label{def:overlap}
An \textbf{overlap measure} $G$ is a function on pairs of distributions and subdistributions $(p,q)$ with the properties that 
\begin{enumerate}[(i)]
    \item $G(p,q) \ge 0$ for all distributions $p$ and subdistributions $q$,
    \item for any fixed $p$, $G(p,q)$ is uniquely maximized at $q=p$.
\end{enumerate}
\end{definition}

Further, we observe that overlap measures can be constructed based on distance functions.

\begin{definition}[Distance-based overlap measure]
Let $d(p,q)$ be a bounded distance function on the space of distributions $p$ and subdistributions $q$ with the property that $d(p,q) \ge 0$, with $d(p,q) = 0$ if and only if $p=q$. Denote by $d^*$ the maximum value attained by $d$ over all pairs $(p,q)$. Then, the \textbf{$d$-overlap measure} $G_d$ is defined as $G_d(p,q) \coloneqq d^* - d(p,q).$
\end{definition}
Now, it is clear that $G_d$ indeed satisfies both properties of an overlap measure (Definition \ref{def:overlap}): property (i) follows from the definition of $d^*$, and property (ii) follows from the unique minimization of $d$ at $q=p$. 

For an overlap measure $G$ and a recommendation list $\pi = \pi_1 \pi_2 \dots \pi_k$, we define $G(\pi) \coloneqq G(p,q(\pi))=G(p, \sum_{i=1}^k w_i q_{\pi_i})$.

\subsection{Constructing Families of Overlap Measures} \label{sec:overlap_families}

An important class of distances between distributions are $f$-divergences. Given a convex function $f$ with $f(1) = 0$, the $f$-divergence from distribution $q$ to distribution $p$ is 
$
D_f(p,q) \coloneqq \sum_{x\in \Omega} f\left( \frac{p(x)}{q(x)}\right) q(x).
$
One such $f$-divergence is the KL divergence, which \cite{Steck18} uses to define a \emph{maximum marginal relevance} objective function similar to an overlap measure. However, this proposed function has issues with mixed sign (see Appendix~\ref{sec:bad_netflix} for an example),
so it does not admit well-specified formal approximation guarantees. Instead, we consider a broad class of overlap measures based on $f$-divergences for all convex functions $f$. As a concrete example, consider the squared Hellinger distance (obtained by choosing $f(t) = (\sqrt{t}-1)^2$ or $f(t) = 2(1-\sqrt{t})$), which is of the form $$H^2(p,q) = \frac{1}{2} \sum_{x\in \Omega} (\sqrt{p(x)} - \sqrt{q(x)})^2 = 1 - \sum_{x\in \Omega} \sqrt{p(x) \cdot q(x)}.$$ This divergence is bounded above by $d^* = 1$; the resulting $H^2$-overlap measure is $$G_{H^2} (p,q) = \sum_{x\in \Omega} \sqrt{p(x) \cdot q(x)}.$$ 

Inspired by the squared Hellinger-based overlap measure, we also construct another general family of overlap measures based on non-decreasing concave functions. Given any nonnegative non-decreasing concave function $h$, we define the overlap measure $G^h(p,q) = \sum_{x\in \Omega} \frac{h(q(x))}{h'(p(x))}$. For instance, taking $h(x) = x^\beta$ for $\beta \in (0,1)$ gives $\frac{1}{h'(x)} = \frac{1}{\beta} x^{1-\beta}$, which produces the (scaled) overlap measure $G^{x^\beta}(p,q) = \sum_{x\in \Omega} p(x)^{1-\beta} q(x)^\beta.$ 
Observe that the natural special case of $\beta = \frac{1}{2}$ gives $h(x) = \frac{1}{h'(x)} = \sqrt{x}$, providing an alternate construction that recovers the squared Hellinger-based overlap measure.  

\subsection{Monotone Diminishing Return (MDR) Overlap Measures} \label{sec:mdr_overlap}

Many classical distances, including those discussed above, are originally defined on pairs of distributions $(p,q)$ but admit explicit functional forms that can be evaluated using the values of $p(x)$ and $q(x)$ for all $x \in \Omega$. This allows us to compute $d(p,q)$, and consequently $G_d(p,q)$, when $q$ is not a distribution (\ie, the values do not sum to $1$), which will be useful in defining algorithms for finding well-calibrated lists. Using this extension, we can take advantage of powerful techniques from the classical submodular optimization literature when a certain extension of the overlap measure $G_d$ is monotone and submodular, properties satisfied by most distance-based overlap measures. 

Consider an extension of an overlap measure $G(p,q)$ to a function on the ground set $V=\{(i,j)\}$, where $i\in[K]$ is an item and $j\in [k]$ is a position. For a set $R \subseteq V$, define $R^{\le j}$ be the set of items assigned to position $j$ or earlier; that is, 
$$
R^{\le j} \coloneqq \left\{ i \in [K] \mid  \exists \ell\le j \text{ s.t. } (i,\ell) \in R \right\}.
$$
Assuming the overlap measure $G$ is well-defined as long as the input $q$ is non-negative (but not necessarily a probability distribution), we define the set function 
 $$F_G(R) \coloneqq G\left(p, \sum_{j=1}^{k} w_j \left(\sum_{i\in R^{\le j}\setminus R^{\le j-1} } q_i\right)\right).$$

With this definition, we can define monotone diminishing return (MDR) and strongly monotone diminishing return (SMDR) overlap measures:
\begin{definition}[(S)MDR overlap measure]
An overlap measure $G$ is \textbf{monotone diminishing return (MDR)} if its corresponding set function $F_G$ is monotone and submodular. If, in addition, $G$ is non-decreasing with respect to all $q(x)$, we say $G$ is \textbf{strongly monotone diminishing return (SMDR)}.
\end{definition}

For any bounded monotone $f$-divergence, the corresponding overlap measure satisfies the MDR property, where by monotone we mean that if subdistribution $q_2$ coordinate-wise dominates subdistribution $q_1$, then $D_f(p,q_1) \ge D_f(p,q_2)$ for all $p$. Further, all overlap measures $G^h$ defined above by concave functions $h$ satisfy the SMDR property. A detailed technical discussion is deferred to Appendix~\ref{sec:proofsMDR}, but at a high level, since $D_f$ is negated in the construction of $G_{D_f}$, the convexity of $f$ (since $D_f$ is negated) and the concavity of $h$ result in concave overlap measures (corresponding to diminishing returns).

\begin{restatable}{theorem}{fdivergence}
\label{thm:overlap_fdiv}
Given any bounded monotone $f$-divergence $D_f$ with maximum value $d^* = \max_{(p,q)} D_f(p,q)$, the corresponding $D_f$-overlap measure $G_{D_f}(p,q) = d^* - D_f(p,q)$ is MDR.
\end{restatable}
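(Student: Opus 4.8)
The plan is to show that $F_{G_{D_f}}$ is both monotone and submodular, working directly with the explicit summation form of the $f$-divergence. The key structural observation is that the induced subdistribution $\sum_{j=1}^{k} w_j \big(\sum_{i \in R^{\le j}\setminus R^{\le j-1}} q_i\big)$ decomposes coordinate-wise, and since $D_f(p,q) = \sum_{x \in \Omega} f(p(x)/q(x))\, q(x)$ is a sum of independent per-coordinate contributions, it suffices to analyze the behavior in each coordinate $x \in \Omega$ separately and then sum. For a fixed coordinate $x$, the relevant quantity is the weighted sum $Q_x(R) = \sum_{j} w_j \big(\sum_{i \in R^{\le j}\setminus R^{\le j-1}} q_i(x)\big)$, which is a nonnegative linear function of the ``effective weight'' that each item contributes to coordinate $x$.

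First I would establish monotonicity. Adding an element $(i,j)$ to $R$ can only increase each $Q_x(R)$ (or leave it unchanged), since it either introduces item $i$ at a position it did not previously occupy or has no effect; either way the coordinate values $Q_x$ are weakly increasing. The monotonicity hypothesis on $D_f$ then gives that $D_f(p, Q(R))$ is weakly decreasing in $R$, so $G_{D_f} = d^* - D_f$ is weakly increasing, i.e.\ $F_{G_{D_f}}$ is monotone.

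Next I would establish submodularity. The cleanest route is to define the per-coordinate overlap contribution $\phi_x(v) := d^*_x - f(p(x)/v)\,v$ (with the $d^*$ mass split across coordinates), so that $F_{G_{D_f}}(R) = \sum_x \phi_x(Q_x(R))$. Because $f$ is convex, the map $v \mapsto f(p(x)/v)\,v$ is the perspective of a convex function and hence convex in $v$, so $\phi_x$ is \emph{concave} in $v$; this is precisely the ``diminishing returns'' phenomenon alluded to in the text. Submodularity of $F_{G_{D_f}}$ then follows from the standard fact that a concave function composed with a monotone modular (linear) set function yields a submodular set function: the marginal gain $\phi_x(Q_x(R \cup \{(i,j)\})) - \phi_x(Q_x(R))$ corresponds to an increment of $Q_x$ that is independent of $R$ in size but applied at a higher base value when $R$ is larger, and concavity makes larger base values produce smaller gains. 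Summing the submodular contributions over coordinates $x$ preserves submodularity.

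The main obstacle I anticipate is handling the interaction between the position structure (the sets $R^{\le j}\setminus R^{\le j-1}$ and the weights $w_j$) and the concavity argument, since adding an element at position $j$ may not correspond to a clean fixed increment of $Q_x$ when the same item already appears at some other position in $R$ — the set-difference construction makes the map $R \mapsto Q_x(R)$ only piecewise-modular rather than globally linear. I would therefore need to argue carefully that $Q_x$ is in fact a monotone submodular (indeed, modular-like) set function in its own right, or alternatively verify the defining submodular inequality $F(R \cup \{e\}) - F(R) \ge F(S \cup \{e\}) - F(S)$ for $R \subseteq S$ directly by combining (a) the weak monotonicity of the increments $\Delta_e Q_x(R) \ge \Delta_e Q_x(S) \ge 0$ induced by the $R^{\le j}$ structure with (b) the concavity of each $\phi_x$, taking care that the decreasing weights $w_j$ do not disrupt the sign of the comparison. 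Closing this gap rigorously — reconciling the ordered/position-indexed ground set with the coordinatewise concavity — is where the real work lies, and it is why the full argument is deferred to the appendix.
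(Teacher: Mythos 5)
Your high-level plan coincides with the paper's proof: both decompose $F_{G_{D_f}}$ coordinate-wise, obtain monotonicity from the monotone-divergence hypothesis applied to the coordinate-wise domination $q^{(R)}(x) \le q^{(T)}(x)$, and obtain submodularity by combining per-coordinate concavity with a comparison of increments (your perspective-function observation is a cleaner route to what the paper gets by computing $\partial^2 G_{D_f}/\partial q(x)^2 = -f''(p(x)/q(x)) \cdot p(x)^2/q(x)^3 \le 0$). However, the step you defer at the end is not a technicality --- it is the actual content of the paper's proof, and your proposal does not supply it. Concretely, one must show that for $R \subseteq T$ and a new pair $(a,b)$, the coordinate increments satisfy $q^{(R \cup \{(a,b)\})}(x) - q^{(R)}(x) \ge q^{(T \cup \{(a,b)\})}(x) - q^{(T)}(x) \ge 0$. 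As you correctly suspect, $Q_x$ is not modular: because only the earliest occurrence of an item counts, adding $(a,b)$ contributes $w_b \, q_a(x)$ if $a$ is absent from $R$, contributes the re-weighting $(w_b - w_{\ell_R(a)}) \, q_a(x)$ if $a$ already appears but only after position $b$, and contributes nothing if $a$ already appears at or before $b$. The paper handles this with the quantity $\ell_R(i) = \min\{j \mid (i,j) \in R\}$ (with the convention $w_{k+1} = 0$) and a three-case analysis ($a \in R^{\le k}$; $a \in T^{\le k} \setminus R^{\le k}$; $a \notin T^{\le k}$), each case reducing to $w_{\ell_{R'}(a)} - w_{\ell_R(a)} \ge w_{\ell_{T'}(a)} - w_{\ell_T(a)}$; the decreasing-weights assumption $w_1 \ge \cdots \ge w_k$ is used essentially throughout. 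Note that your monotonicity paragraph also silently uses it: introducing item $i$ at an earlier position does not simply add a contribution, it replaces $w_{\ell_R(i)}$ by the new weight, and this is an increase only because weights decay with position.

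A second, subtler point needs repair: the ``standard composition fact'' you invoke requires the outer function to be concave \emph{and nondecreasing}, and once the two increments differ in size, concavity alone does not give $\phi_x(u+\delta_1) - \phi_x(u) \ge \phi_x(v+\delta_2) - \phi_x(v)$ for $u \le v$ and $\delta_1 \ge \delta_2 \ge 0$; you need $\phi_x$ monotone on the relevant range. Here that monotonicity is exactly what the hypothesis that $D_f$ is a \emph{monotone} divergence supplies (restrict coordinate-wise domination to vectors differing in a single coordinate), and it cannot be dropped: as the paper notes right after the theorem statement, the per-coordinate derivative $f'\left(\frac{p(x)}{q(x)}\right) \cdot \frac{p(x)}{q(x)} - f\left(\frac{p(x)}{q(x)}\right)$ has no guaranteed sign for general convex $f$. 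So your outline is the right one, but a complete proof must (i) establish the increment inequality via the $\ell_R$ case analysis and (ii) thread the monotonicity hypothesis explicitly through the final concavity step rather than citing the composition fact for modular inner functions.
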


\begin{restatable}{theorem}{hoverlap}
\label{thm:overlap_h}
Given any nonnegative non-decreasing concave function $h$, the overlap measure $G^h(p,q) = \sum_{x\in \Omega} \frac{h(q(x))}{h'(p(x))}$ is SMDR.
\end{restatable}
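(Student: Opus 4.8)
The plan is to verify the two conditions that constitute the SMDR property separately: first the ``strong'' coordinatewise monotonicity of $G^h$ itself, and then the monotonicity and submodularity of the induced set function $F_{G^h}$. The strong monotonicity is immediate by differentiation: since $G^h(p,q) = \sum_{x} h(q(x))/h'(p(x))$ is separable, $\partial G^h(p,q)/\partial q(x) = h'(q(x))/h'(p(x))$, which is nonnegative because $h$ is non-decreasing (so $h' \ge 0$) and $h'(p(x)) > 0$ on the relevant range. Hence $G^h$ is non-decreasing in every $q(x)$, which is precisely the extra requirement distinguishing SMDR from MDR.

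The heart of the argument is showing that $F_{G^h}$ is monotone and submodular, and the first step I would take is to rewrite the induced subdistribution in a transparent form. For $R \subseteq V$ let $e_R(i) \coloneqq \min\{j : (i,j)\in R\}$ be the earliest position assigned to item $i$ (contributing weight $0$ if $i$ is unselected). Since $R^{\le j}\setminus R^{\le j-1}$ is exactly the set of items whose earliest position equals $j$, the induced value in genre $g$ is $q^R(g) = \sum_i w_{e_R(i)}\, q_i(g)$, and therefore $F_{G^h}(R) = \sum_g \phi_g(q^R(g))$, where $\phi_g(t) \coloneqq h(t)/h'(p(g))$ is concave and non-decreasing. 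Because a sum of submodular functions is submodular, it suffices to prove that for each fixed genre $g$ the map $R \mapsto \phi_g(q^R(g))$ is monotone submodular.

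I would do this in two steps. The inner map $\psi_g(R) \coloneqq q^R(g) = \sum_i q_i(g)\cdot w_{e_R(i)}$ decomposes as a nonnegative combination over items, where each per-item term depends only on the set $S_i$ of positions assigned to $i$, through $w_{\min S_i}$; since the weights are weakly decreasing, $w_{\min S_i} = \max_{j\in S_i} w_j$, which is the classical monotone submodular ``maximum-weight element'' function. Hence $\psi_g$ is monotone submodular. The second step composes this with $\phi_g$ via the standard lemma that a non-decreasing concave function applied to a monotone submodular function is again monotone submodular, whose short proof I would include: for $R\subseteq R'$ and $e\notin R'$, writing $a=\psi_g(R)\le c=\psi_g(R')$ with marginals $b-a \ge d-c \ge 0$, the inequality $\phi_g(b)-\phi_g(a) \ge \phi_g(a+(d-c))-\phi_g(a) \ge \phi_g(d)-\phi_g(c)$ follows from $\phi_g$ being non-decreasing (first step) and concave with $a\le c$ (second step).

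The main obstacle is the first step: recognizing that the attention-decay weighting, which assigns each item the weight of its \emph{earliest} selected position, is exactly a max-of-weights structure and hence genuinely submodular rather than merely monotone. Once this identification is made, the composition lemma is routine, and monotonicity of $F_{G^h}$ follows immediately from the monotonicity of each $\psi_g$ and $\phi_g$ (equivalently, from the strong monotonicity established at the outset together with the coordinatewise monotonicity of $q^R$ in $R$, namely that enlarging $R$ can only add items or move an item to an earlier, higher-weight position).
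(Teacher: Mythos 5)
Your proof is correct, and it takes a genuinely different route from the paper's. Both arguments start from the same rewriting
$F_{G^h}(R) = \sum_{x\in\Omega} \frac{1}{h'(p(x))}\, h\bigl(\sum_{i} w_{\ell_R(i)}\, q_i(x)\bigr)$,
where $\ell_R(i)$ is the earliest position holding item $i$ (weight $0$ if unselected). From there, the paper proceeds by a bare-hands three-case analysis of how $\ell_R(a)$ and $\ell_T(a)$ change when an element $(a,b)$ is added to nested sets $R \subseteq T$, proving in each case a per-genre marginal-dominance inequality and then applying monotonicity and concavity of $h$ inside the case. You instead factor the argument into two standard modules: (i) the observation that, because the weights are weakly decreasing, $w_{\ell_R(i)} = \max\{w_j : (i,j)\in R\}$, so each genre's induced mass $\psi_g(R) = \sum_i q_i(g)\, w_{\ell_R(i)}$ is a nonnegative combination of per-item max-of-weights functions and hence monotone submodular; and (ii) the lemma that a non-decreasing concave function of a monotone submodular function is monotone submodular, which you prove correctly --- the chain $\phi_g(b)-\phi_g(a) \ge \phi_g(a+(d-c))-\phi_g(a) \ge \phi_g(d)-\phi_g(c)$ uses exactly $b-a \ge d-c \ge 0$, monotonicity, and concavity with $a \le c$. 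Your decomposition is shorter, more modular, and makes transparent where the decaying-attention assumption enters (it is precisely what turns the earliest-position weight into a max); the paper's case analysis, while longer, reuses the identical template as its proof of Theorem~\ref{thm:overlap_fdiv}, where your composition lemma would not apply, since $G_{D_f}$ need not be non-decreasing in $q(x)$ --- which is exactly why that family is only MDR rather than SMDR. Two shared implicit assumptions are worth flagging but are not gaps: $h'(p(x)) > 0$ (needed both for $G^h$ to be well-defined and for your $\phi_g$ to be non-decreasing), and, like the paper's ``Construction'' paragraph but unlike your writeup, one may also wish to note that $G^h$ satisfies the overlap-measure axiom of being uniquely maximized at $q = p$; that verification is separate from, and not required for, the SMDR claim you were asked to prove.
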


Observe that $D_f$-overlap measures are not necessarily \emph{SMDR}, but many $D_f$-overlap measures based on commonly used $f$-divergences are not only bounded and monotone, but also increasing in $q(x)$ -- this includes the squared Hellinger distance defined above, so the resulting overlap measure $G_{H^2}(p,q)$ is indeed both MDR and SMDR.

% \end{document}
% \documentclass[../main.tex]{subfiles}
% % \graphicspath{{\subfix{../images/}}}

% \begin{document}
% !TEX root = main.tex

\section{Calibration in the Discrete Genre Model} \label{sec:discrete}

In this section we consider the version of the calibration
model with \emph{discrete genres}, 
in which each item is classified into a single genre. 
In this model, we allow the list of items to contain repeated genres, since it is natural to assume that the universe contains many items of each genre, and that a recommendation list may display multiple items of the same genre.

We start by thinking about a solution to the problem in this model as a sequence of choices of genres, and we study how the value of
the objective function changes as we append items to the
end of the sequence being constructed.
In particular, we show that as we append items,
the value of the objective function
changes in a way that is
governed by a basic inequality, that intuitively can be viewed as an analogue of monotonicity and submodularity but for sequences rather than sets.
We pursue this idea by defining any function on sequences
to be {\em ordered-submodular} if it satisfies this basic
inequality; in particular, the Hellinger measures of calibration
for our problem (as well as more general families based on the
overlap measures defined earlier) are ordered-submodular in this sense.

As a warm-up to the main result of this section, we start 
by showing that for {\em any} ordered-submodular function, the natural greedy algorithm that iteratively adds items to maximally increase the objective function achieves 
a factor $1/2$-approximation to the optimal sequence.
Note that this approximation guarantee is weaker than the $(1-1/e)$ guarantee obtained by classical (constrained) submodular optimization, 
% submodular optimization over a matroid in Section \ref{sec:distributional},
but we present it because it creates a foundation for
analyzing the greedy algorithm which we can then strengthen
to break through the $(1 - 1/e)$ barrier and achieve a
$2/3$-approximation for the problem of calibration with discrete genres.
(In contrast, the techniques achieving $1 - 1/e$ appear to be
harder to use as a starting point for improvements, since they
run up against tight hardness bounds for submodular maximization.) % \ercomment{condensed this paragraph}

To start, we make precise exactly how the greedy algorithm works
for approximate maximization of a function $f$ over sequences. 
The greedy algorithm
initializes $A_0 = \emptyset$ (the empty sequence), and for $\ell = 1, 2, \dots, k$, it selects $A_\ell$ to be the sequence that maximizes our function $f(A)$ over all sequences obtained by appending an element to the end of $A_{\ell-1}$. In other words, it iteratively appends elements to the sequence $A$ one by one, each time choosing the element that leads to the greatest marginal increase in the value of $f$.

To simplify notation, for two sequences $A$ and $B$ we use $A||B$ to denote their concatenation. For a single element $s$, we use $A||s$ to denote $s$ added at the end of the list $A$.

\subsection{Ordered-submodular Functions and the Greedy Algorithm} \label{sec:ordsub}

Let $f$ be a function defined on a sequences of elements from some ground set; we say that $f$ is \emph{ordered-submodular} if for all sequences of elements $s_1 s_2 \dots s_k$, the following property holds for all $i \in [k]$ and all other elements $\bar{s}_i$: 
\begin{equation}\label{eq:ordsub}
    f(s_1 \dots s_i) - f(s_1 \dots s_{i-1}) \ge f(s_1 \dots s_i \dots s_k) - f(s_1 \dots s_{i-1} \bar{s}_i s_{i+1} \dots s_k).
\end{equation}

Notice that if $f$ is an ordered-submodular function that takes
sequences as input but does not depend on their order
(that is, it produces the same value for all permutations of
a given sequence), then it follows immediately from the definition
that $f$ is a monotone submodular set function.
In this way, monotone submodular set functions are a special
case of our class of functions.

A standard algorithmic inductive argument shows that the greedy algorithm described earlier attains a $1/2$-approximation to the optimal sequence. Next, observe that the MDR property defined in Section \ref{sec:overlap_families} directly implies ordered submodularity (via submodularity and monotonicity of $\hat{F}_G$), and hence the greedy algorithm is a $1/2$-approximation algorithm for these calibration problems. Full proofs of both claims above as well as the theorem are given in Appendix~\ref{sec:proofsordsub}.

\begin{restatable}{theorem}{proptwoapprox}
\label{prop:2approx} 
The greedy algorithm for nonnegative ordered-submodular function maximization over sets of cardinality $k$ outputs a solution whose value is at least $\frac{1}{2}$ times that of the optimum solution.
\end{restatable}

\begin{restatable}{theorem}{ordsuboverlap}
\label{thm:ordsub_overlap}
Any MDR overlap measure $G$ is ordered-submodular. Thus, the greedy algorithm provides a $1/2$-approximation for calibration heuristics using MDR overlap measures.
\end{restatable}

\subsection{Improved Approximation for Calibration with Discrete Genres} \label{sec:cal2/3}

Next, we focus on calibration using the squared Hellinger-based overlap measure, which has several useful properties: (1) it is SMDR, and thus the approximation guarantee is directly comparable to the $(1-1/e)$ guarantee in the distributional model that we discuss next in Section~\ref{sec:distributional}; (2) its mathematical formula is amenable to genre-specific manipulations; (3) perhaps most importantly, it is well-motivated by frequent use in the calibration literature (\eg, \cite{calAMBM20,calNRAS22,calCD22}). (We note that our techniques apply generally to many overlap measures, such as the second family based on concave functions described in Section~\ref{sec:overlap_families}, but the quantitative $2/3$ bound is specific to the squared Hellinger-based overlap measure.\footnote{In particular, our bin-packing analysis of the greedy algorithm relies on concavity along the direction of improvement, so it applies to other overlap measures such as those in the $G^{x^\beta}$ family, but the numerical constant of $\frac12$ in Lemma~\ref{lem:ineq1/2} (and thus the final approximation guarantee of $\frac23$ in Theorem~\ref{thm:2/3approx}) would change. Here, we focus on the particular case of $\beta = \frac12$, as the induced Hellinger-based overlap measure is one that is commonplace in practice.}) We prove this improved approximation result using the concrete form of the Hellinger distance to establish a stronger version of the ordered submodularity property.

Given that each item belongs to a single genre, and that we have many copies of items for each genre, the question we ask in each step of the greedy algorithm is now: at step $i$, which genre should we choose to assign weight $w_i$ to? We can think of this problem as a form of ``bin-packing'' problem, packing the weight $w_i$ into a bin corresponding to a genre $g$. 

Since every item represents a single discrete genre, we can interpret a recommendation list as an assignment of \emph{slots} to \emph{genres}.  Then, using $s_i = g$ to denote that a sequence $S$ assigns slot $i$ to genre $g$, we can write the squared Hellinger-based overlap measure as
\begin{equation}
\label{eq:discrete-Hellinger}
f(S) = \sum_{\text{genres } g} \sqrt{p(g)} \sqrt{\sum_{i\in [k] : s_i = g} w_i}.  
\end{equation}

The main technical way we rely on the Hellinger distance is the following Lemma, which strengthens inequality (\ref{eq:approx-ineq}) but does not assume that the sequence $T^i$ or $\bar T^{i+1}$ is coming from the optimal sequence, or that they are identical except for their first element.

\begin{restatable}{lemma}{strongerineq}
\label{lem:ineq1/2} With calibration defined via the Hellinger distance, for all sequences $A_{i-1}$ and $T^i$, and the greedy choice of extending $A_{i-1}$ with the next element $a_{i}$, there exists a sequence $\bar T^{i+1}$ such that 
$$f(A_{i} || \bar T^{i+1}) \ge f(A_{i-1} || T^i) - \frac{1}{2} \left(f(A_i) - f(A_{i-1}) \right).$$
\end{restatable}

Before we prove the lemma, we show that it inductively yields a $2/3$-approximation guarantee: 

\begin{restatable}{theorem}{twothirds}\label{thm:2/3approx}
For the calibration problem with discrete genres, the greedy algorithm provides a $2/3$-approximation for the squared Hellinger-based overlap measure.
\end{restatable}
\begin{proof}
We define $S^{(1)}$ to be the optimal sequence $S$, and using Lemma~\ref{lem:ineq1/2} with $T^i = S^{(i)}$, we define inductively $S^{(i+1)} = \bar T^{i+1}$ (the existence of which is stated by the lemma).

We show via induction that for all $i$, $f(A_i || S^{(i+1)}) \ge OPT(k) - \frac{1}{2}f(A_i).$

    For the base case of $i=0$, we have $f(A_0 || S^{(1)}) = f(S) = OPT(k) \ge OPT(k) - \frac{1}{2}f(A_0)$. So suppose the claim holds for some $i$, and observe that by Lemma~\ref{lem:ineq1/2} and the fact that $f(A_{i+1}) \ge f(A_i || s_{i+1})$ by definition of the greedy algorithm, we have
    \begin{align*}
        f(A_{i+1} || S^{(i+2)}) &\ge f(A_i || S^{(i+1)}) - \frac{1}{2} (f(A_{i+1}) - f(A_{i})) \\
        &\ge OPT(k) - \frac{1}{2} f(A_i) - \frac{1}{2} (f(A_{i+1}) - f(A_{i})) \\
        &= OPT(k) - \frac{1}{2} f(A_{i+1}),
    \end{align*}
    completing the induction. Finally, setting $i=k$ establishes $ALG(k) \ge \frac{2}{3} OPT(k)$.
\end{proof}

\noindent \textbf{Remark.} We note that this approximation guarantee is fairly robust to settings in which we do not have perfectly accurately information about the preferences and genres, but only with a small degree of error or noise to within a multiplicative factor of $(1+\varepsilon)$. In this case, we still maintain a $\frac{2/3}{(1+\varepsilon)^2}$-approximation; for more details, see Appendix~\ref{sec:noisy_approx}.

Next, we outline the proof of Lemma~\ref{lem:ineq1/2}; the full analysis is in Appendix~\ref{sec:proofscal2/3}.

\begin{proof}[Proof outline of Lemma \ref{lem:ineq1/2}]
Consider the sequence $A_{i-1} || T^i$ and the greedy choice $a_i$, and let $t_i$ be the first element of $T^i$. Recall that each of these items is a genre, and the term multiplying $\sqrt{p(g)}$ in the Hellinger distance (\ref{eq:discrete-Hellinger}) is the sum of the weights of all positions where a given genre $g$ is used. To show the improved bound, it will help to define notation for the total weight of positions that have a genre $g$ in $A_{i-1}$ and in $T^{i+1}$ respectively, skipping the genre in the $i^\text{th}$ position. Since this lemma focuses on a single position $i$, we will keep $i$ implicit in some of the notation.

Let $\alpha(g) \coloneqq \sum_{\substack{\{j\in [i-1], a_j = g\}}} w_j$ denote the total weight of the slots assigned to genre $g$ by $A_{i-1}$. Let $\tau(g) \coloneqq \sum_{\substack{\{j\in [i+1, k], t_j = g\}}} w_j$ denote the total weight assigned to genre $g$ by $T^{i+1}$. Say that the greedy algorithm assigns slot $i$ to genre $a_i=g'$, but in $T^i$ the first genre (corresponding to slot $i$) is $t_i=g^*$. 

Next, notice that for the squared Hellinger-based overlap measure (\ref{eq:discrete-Hellinger}), there are only two genres in which $f(A_{i} || T^{i+1})$  and $ f(A_{i-1} || T^i)$ differ: the genre $a_i=g'$ chosen by the greedy algorithm, and the genre $t_i = g^*$ of the first item of the sequence $T^i$. For all other genres, the sum of assigned weights in the definition of the Hellinger distance is unchanged.

First, writing $T^{i+1}$ to denote simply dropping the first item assignment from $T^i$, and denoting the blank in position $i$ by $\_$, we get

\begin{align*}
    f(A_{i-1} || a_i || T^{i+1}) - f(A_{i-1} || \_ || T^{i+1}) & = \sqrt{p(g')} \left(\sqrt{\alpha(g') + w_i + \tau(g')} - \sqrt{\alpha(g') + \tau(g')} \right), \\
    f(A_{i-1} || t_i) - f(A_{i-1}) &= \sqrt{p(g^*)} \left(\sqrt{\alpha(g^*) + w_i} - \sqrt{\alpha(g^*)} \right), \\
    f(A_{i-1} || t_i || T^{i+1}) - f(A_{i-1} || \_ || T^{i+1}) &= \sqrt{p(g^*)} \left(\sqrt{\alpha(g^*) + w_i + \tau(g^*)} - \sqrt{\alpha(g^*) + \tau(g^*)} \right). 
\end{align*}

Using these expressions and the monotonicity and convexity of the square root function, we get 
\begin{align*}
    f(A_{i-1} || T^i) - f(A_i || T^{i+1}) &= \sqrt{p(g^*)} \left(\sqrt{\alpha(g^*) + w_i + \tau(g^*)} - \sqrt{\alpha(g^*) + \tau(g^*)} \right) \\
    &\qquad - \sqrt{p(g')} \left(\sqrt{\alpha(g') + w_i + \tau(g')} - \sqrt{\alpha(g') + \tau(g')} \right) \\
    &\le \sqrt{p(g^*)} \left(\sqrt{\alpha(g^*) + w_i + \tau(g^*)} - \sqrt{\alpha(g^*) + \tau(g^*)} \right) \\
    &\le \sqrt{p(g^*)} \left(\sqrt{\alpha(g^*) + w_i} - \sqrt{\alpha(g^*)} \right) \\
    &= f(A_{i-1} || t_i) - f(A_{i-1}) \\
    &\le f(A_i) - f(A_{i-1})
\end{align*}

To obtain the improved bound, we need to distinguish a few cases.
If $f(A_i || T^{i+1}) \ge f(A_{i-1} || T^i)$ (\eg, if $g'=g^*$), then it suffices to take $\bar T^{i+1} = T^{i+1}$ and the inequality holds trivially. Hence, we assume that $g' \ne g^*$ and $f(A_i || T^{i+1}) \le f(A_{i-1} || T^i)$.

Now, we may need to modify $T^i$ to get $\bar T^{i+1}$, depending on the size of $\tau(g')$ relative to $w_i$.

\textbf{Case 1: $\tau(g') \ge \frac{1}{2} w_i$.} 
Intuitively, since the greedy algorithm added $w_i$ to $g'$ instead of $g^*$, we should not assign so much additional weight to $g'$. To create $\bar T^{i+1}$, we start from $T^{i+1}$ (the part of $T^i$ without the first item), but make an improvement by reassigning some subsequent items from $g'$ to $g^*$.

Because $\tau(g')$ is the sum of weights each of which is at most $w_i$ (as the weights of positions are in decreasing order), we can move some weight $z$ satisfying $\frac{1}{2}w_i \le z \le w_i$ from $g'$ to $g^*$.  Now, consider the function 
$$c(x) = \sqrt{p(g')} \sqrt{\alpha(g') + \tau(g') + w_i - x} + \sqrt{p(g^*)} \sqrt{\alpha(g^*) + \tau(g^*) + x},$$
representing the contribution from genres $g'$ and $g^*$ towards $f$, after a contribution that moves an amount $x$ from $g'$ to $g^*$ (the change in $f$ will only be due to these two genres, since all others are unchanged). Observe that $x=0$ corresponds to $f(A_i || T^{i+1})$ and $x=w_i$ corresponds to $f(A_{i-1} || T^i)$, so $c(0) \le c(w_i)$. Further, $c$ is concave in $x$. As depicted in the figure below, a correction that is at least $\frac{1}{2} w_i$ increases $f$ by at least half the amount that a full correction of $w_i$ would have achieved. 

\begin{figure}[ht] \label{fig:concave}
    \begin{tikzpicture}[scale=0.8]
    \begin{axis}[
            ticks=none,
            axis lines=middle,
            xlabel={$x$},
            xmin=0, xmax=9,
            ymin=0, ymax=4.2,
            clip=false,
            xticklabels=\empty,
            yticklabels=\empty
        ]
        \addplot+[mark=none,samples=200,unbounded coords=jump, domain=0.5:8.5] {ln(x)+1.5};

        \draw[fill] (axis cs:1,1.5) circle [radius=1.5pt] node[below right] {};
        \draw[fill] (axis cs:4,2.8863) circle [radius=1.5pt] node[below right] {};
        \draw[fill] (axis cs:7,3.4459) circle [radius=1.5pt] node[below right] {};

        \draw (axis cs:1,0) node[below] {$0$};
        \draw (axis cs:4,0) node[below] {$z \ge \nicefrac{w_i}{2}$};
        \draw (axis cs:7,0) node[below] {$w_i$};
        \draw (axis cs:0,1.5) node[left] {$f(A_{i} || T^{i+1})$};
        \draw (axis cs:0,2.8863) node[left] {$f(A_{i} || \bar T^{i+1})$};
        \draw (axis cs:0,3.4459) node[left] {$f(A_{i-1} || T^i)$};
        
        \draw[dashed] (axis cs:1,0) -- (axis cs:1,1.5) -- (axis cs:0,1.5);
        \draw[dashed] (axis cs:4,0) -- (axis cs:4,2.8863) -- (axis cs:0,2.8863);
        \draw[dashed] (axis cs:7,0) -- (axis cs:7,3.4459) -- (axis cs:0,3.4459);
    \end{axis}
    \end{tikzpicture}
    \caption{Change in $f(A_i||\bar T^{i+1})$ as we move $x$ weight from $g'$ to $g^*$.}
\end{figure}
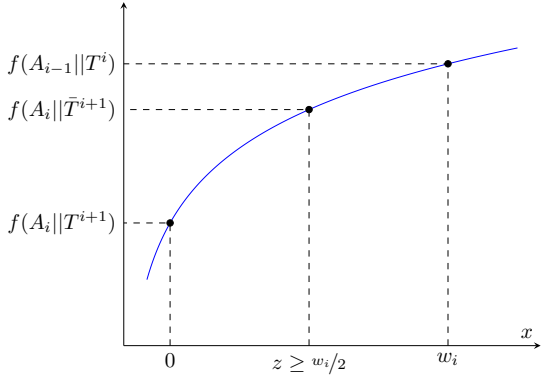

Then, the remaining amount is at most half the uncorrected difference; that is, 
$$ f(A_{i-1} || T^i) - f(A_i || \bar T^{i+1}) \le \frac{1}{2} (f(A_{i-1} || T^i) - f(A_i || T^{i+1})).$$ 
Combining this with the form of inequality (\ref{eq:approx-ineq}) re-established at the beginning of the proof yields 
$ f(A_i || T^i) - f(A_i || \bar T^{i+1}) \le \frac{1}{2} (f(A_i) - f(A_{i-1})),$ which we rearrange to give the desired inequality: 
$$ f(A_i | \bar T^{i+1}) \ge f(A_{i-1} | T^i) - \frac{1}{2} \left(f(A_i) - f(A_{i-1})\right). $$

\textbf{Case 2: $\tau(g') < \frac{1}{2} w_i$.}
Now, $\tau(g')$ is small, so there is not much remaining weight that we can reassign from $g'$ to $g^*$. However, observe that any greedy misstep is due to the fact that the greedy algorithm must choose based only on $\alpha(g')$, with no knowledge of $\tau(g')$. If there is a large $\tau(g')$ that the greedy algorithm does not know about, then the choice to fill $g'$ may have been overly eager, and ultimately ends up being less helpful than expected after the remaining items are assigned.

But here, the fact that $\tau(g')$ is small means that this is \emph{not} the case -- the greedy algorithm was not missing a large piece of information, so the choice based only on $\alpha(g')$ was actually quite good. In particular, it cannot turn out to be much worse than $g^*$, meaning that the difference between $f(A_{i-1} || T^i)$ and $f(A_i || T^{i+1})$ is fairly small. 

In fact, the greedy algorithm's lack of knowledge is most harmful when $\tau(g')$ is large and $\tau(g^*)$ is small. So the \emph{worst} possible outcome for this case occurs when $\tau(g') = \frac{1}{2} w_i$ and $\tau(g^*) = 0$, for which we have
\begin{align*}
    &f(A_{i-1} || T^i) - f(A_i || T^{i+1}) \\
    &\qquad= \sqrt{p(g')} \left(\sqrt{\alpha(g') + w_i/2} - \sqrt{\alpha(g') + 3w_i/2}\right) + \sqrt{p(g^*)} \left(\sqrt{\alpha(g^*) + w_i} - \sqrt{\alpha(g^*)}\right) \\
    &\qquad= \sqrt{p(g')} \left(\sqrt{\alpha(g') + w_i/2} - \sqrt{\alpha(g') + 3w_i/2}\right) + f(A_{i-1} || t_i) - f(A_{i-1}) \\
    &\qquad \le \sqrt{p(g')} \left(\sqrt{\alpha(g') + w_i/2} - \sqrt{\alpha(g') + 3w_i/2}\right) + f(A_i) - f(A_{i-1}).
\end{align*}
Then, this gives
\begin{align*}
    \frac{f(A_{i-1} || T^i) - f(A_i || T^{i+1})}{f(A_i) - f(A_{i-1})} &\le 1 - \frac{\sqrt{p(g')} \left(\sqrt{\alpha(g') + 3w_i/2} - \sqrt{\alpha(g') + w_i/2}\right)}{f(A_i) - f(A_{i-1})} \\
    &= 1 - \frac{\sqrt{\alpha(g') + 3w_i/2} - \sqrt{\alpha(g') + w_i/2}}{\sqrt{\alpha(g') + w_i} - \sqrt{\alpha(g')}}.
\end{align*}
This final expression is minimized when $\alpha(g') = 0$, for which  
$$ \frac{f(A_{i-1} || T^i) - f(A_i || T^{i+1})}{f(A_i) - f(A_{i-1})} \le 1 - \sqrt{2 - \sqrt{3}} \le \frac{1}{2},$$ which rearranges to 
$$f(A_{i}  || T^{i+1}) \ge f(A_{i-1} || T^i) - \frac{1}{2} \left(f(A_i) - f(A_{i-1})\right).$$
Thus, simply taking $\bar T^{i+1} = T^{i+1}$ suffices to give the desired result.
\end{proof}

% %\end{document}

% \documentclass[../main.tex]{subfiles}
% % \graphicspath{{\subfix{../images/}}}

% \begin{document}
% !TEX root = main.tex

\section{Calibration in the Distributional Genre Model} \label{sec:distributional}

In this section we consider the general calibrated recommendations problem with a class of distance function between distributions. In this model of \emph{distributional genres}, each item has a specific distribution over genres (as described in Section~\ref{sec:genre_notation}), which we think of as a fine-grained breakdown of all the genres represented by that item.

Note that if we permitted our recommendation list to include repeated elements, then a $(1-1/e)$-approximation algorithm would be possible using a reduction to submodular maximization over a partition matroid constraint (see Appendix~\ref{sec:proofsdistributional} for further details). But realistically, genre mixtures are too specific to have multiple items with identical distributions, and recommendation lists should not show the same item repeatedly. Our main result addresses this setting, providing a $(1-1/e)$-approximation for calibrated recommendation lists \emph{without} repeated elements using SMDR overlap measures.\footnote{One might hope that it would suffice to take a solution with repeats and convert it to a solution without repeats simply by showing items in the order of the first time they appear. Unfortunately, this approach may destroy the submodular structure of the original function, so that the continuous greedy algorithm no longer provides a near-optimal approximation guarantee. Further details are provided in Appendix~\ref{sec:proofsdistributional}.}

To begin, we view a list as an assignment of (at most) one item to each position, so that we consider the ground set of all \emph{item-position pairs} $\{(i,j) \mid  i \in [K], j = \ell\}$ (representing ``item $i$ in position $j$''). Define the laminar family of sets $D_\ell \coloneqq \{(i,j) \mid i \in[K], j \le \ell\}$, and the laminar matroid $\mathcal{M} = (V, \mathcal{I})$, where $R\subset V$ is an independent set in $\mathcal{I}$ if and only if $|R\cap D_\ell| \le \ell$ for all $\ell \in [k]$ (\ie, $R$ assigns at most $\ell$ items to the first $\ell$ positions, essentially corresponding to a ``valid'' list).

We now observe that there is a correspondence between recommendation lists and laminar matroid bases: any list assigns exactly $\ell$ items to the first $\ell$ slots for all $\ell \in [k]$ (and is thus a basis); any basis can also be converted into a list solely by promoting items upwards (and by the strong monotonicity property, this transformation preserves the value of the calibration objective). Then, it suffices to optimize over matroid bases using the continuous greedy algorithm and pipage rounding algorithm technique of \cite{calinescu2011}, then convert the approximately-optimal basis back to an approximately-optimal list. 
%%%%

\begin{restatable}{proposition}{settoseq}
\label{prop:settoseq}
Given a basis $R\in \mathcal{I}$, we can construct a length-$k$ list $\pi$ such that $G(\pi) \ge F_G(R)$.
\end{restatable}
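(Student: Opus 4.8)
The plan is to reinterpret $F_G(R)$ in a form that exposes which weight each distinct item of $R$ effectively receives, and then to reassign those items to genuine positions in a way that can only increase every coordinate of the induced subdistribution. First I would rewrite the objective. For a basis $R$, let $S \subseteq [K]$ be the set of distinct items appearing in $R$, and for each $i \in S$ let $j(i) = \min\{j : (i,j) \in R\}$ be its earliest position. Since $R^{\le j} \setminus R^{\le j-1}$ is exactly the set of items whose earliest position equals $j$, unwinding the definition of $F_G$ gives
\[ F_G(R) = G\Bigl(p, \sum_{i \in S} w_{j(i)}\, q_i\Bigr). \]
So $F_G(R)$ weights each distinct item $i$ by $w_{j(i)}$, but several items may share the same earliest position (so a weight is effectively ``reused'') and some positions may carry no item at all. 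This is precisely why $R$ need not correspond to a legal sequence, and why there is something to prove.

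The key step is to produce an injection $\sigma : S \to [k]$ with $\sigma(i) \le j(i)$ for every $i \in S$. Because the weights are weakly decreasing, $\sigma(i) \le j(i)$ implies $w_{\sigma(i)} \ge w_{j(i)}$, so moving item $i$ to position $\sigma(i)$ only raises its weight. To show such an injection exists I would invoke Hall's theorem on the bipartite graph in which item $i$ may be matched to any position in $\{1, \dots, j(i)\}$. For a subset $T \subseteq S$, its neighborhood is $\{1, \dots, \max_{i \in T} j(i)\}$, so Hall's condition reduces to $|T| \le \max_{i \in T} j(i)$. Setting $\ell = \max_{i \in T} j(i)$, every item of $T$ has earliest position at most $\ell$, and each contributes the distinct element $(i, j(i)) \in R \cap D_\ell$; hence $|T| \le |\{i \in S : j(i) \le \ell\}| \le |R \cap D_\ell| \le \ell$, where the last inequality is exactly the laminar independence of $R$. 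This is the crux of the argument: the matroid constraint $|R \cap D_\ell| \le \ell$ supplies precisely the prefix-count bound that Hall's condition requires.

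Given $\sigma$, I would construct $\pi$ by placing each $i \in S$ at position $\sigma(i)$ and filling the remaining $k - |S|$ positions with arbitrary distinct items of $[K] \setminus S$ (which exist whenever $K \ge k$), producing a repeat-free length-$k$ list. Its induced subdistribution satisfies, coordinate-wise,
\[ q(\pi) = \sum_{i \in S} w_{\sigma(i)} q_i + \sum_{\text{filled } j} w_j q_{\pi_j} \;\ge\; \sum_{i \in S} w_{\sigma(i)} q_i \;\ge\; \sum_{i \in S} w_{j(i)} q_i, \]
the first inequality because the filler terms are nonnegative, and the second because $w_{\sigma(i)} \ge w_{j(i)}$ and each $q_i \ge 0$. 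Finally, since $G$ is SMDR and therefore nondecreasing in every coordinate of its second argument, monotonicity yields $G(\pi) = G(p, q(\pi)) \ge G\bigl(p, \sum_{i \in S} w_{j(i)} q_i\bigr) = F_G(R)$, as claimed.

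The main obstacle is the combinatorial core, namely establishing the domination-respecting injection $\sigma$ and recognizing that the laminar rank constraints are exactly calibrated to yield Hall's condition. The reduction of $F_G(R)$ and the concluding monotonicity step are then routine, the latter relying essentially on the ``strong'' part of the SMDR hypothesis (coordinatewise monotonicity of $G$), which is what licenses replacing the filler-augmented $q(\pi)$ by the smaller subdistribution without decreasing $G$.
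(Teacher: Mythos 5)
Your proof is correct and follows essentially the same route as the paper's: both rewrite $F_G(R)$ as $G\bigl(p, \sum_{i} w_{\ell_R(i)} q_i\bigr)$ using earliest positions, both use the laminar rank bound $|R \cap D_\ell| \le \ell$ to move each item to a position no later than its earliest occurrence (hence with weakly larger weight), and both conclude via the coordinatewise monotonicity of the SMDR measure. The only cosmetic difference is that where you invoke Hall's theorem to obtain the injection $\sigma$, the paper constructs it directly by sorting items in increasing order of $\ell_R(\cdot)$ and observing the same prefix-counting bound, which makes your Hall's-condition verification an abstraction of the identical counting argument.
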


\begin{proof}
For every item $i$, we define $\ell_R (i)$ to be the first position that $i$ occurs in $R$; that is, $\ell_R (i)\coloneqq \min \left\{ j\in [k] |  (i,j) \in R \right\}$, or $\ell_R(i) = k+1$ if no such $j$ exists. We also introduce the notation of $w_{k+1}=0$. Sort the items in increasing order of $\ell_R(\cdot)$ (breaking ties arbitrarily), and call this sequence $\pi$. We claim that $G(\pi) \ge F_G(R)$.

Consider an arbitrary item $j$. By definition of the laminar matroid, 
$$|R \cap D_{\ell_R(i)}| = \sum_{y = 1}^k \sum_{x=1}^{\ell_R (i)} \1{(x,y) \in R} \le \ell_R(i). $$ 
The summation is an upper bound on the number of items $x$ with $(x,y) \in R$ for some $y \le \ell_R (i)$. But these are exactly the items with $\ell_R(x) \le \ell_R(i)$ (including $i$ itself), and therefore the items that can appear before $i$ in $\pi$. So the position at which $i$ appears in $\pi$, denoted $\pi^{-1}(i)$, is less than or equal to $\ell_R(i)$. This implies $w_{\pi^{-1}(i)} \ge w_{\ell_R (i)}$ for all $i$. 

Now, observe that $R^{\le j} \setminus R^{\le j-1}$ is exactly the set of items which appear for the \emph{first} time in position $j$; thus $\ell_R(i) = j$ for all $i \in R^{\le j} \setminus R^{\le j-1}$. Additionally, $R^{\le 1} \subseteq R^{\le 2} \subseteq \dots \subseteq R^{\le k} \subseteq [K]$. Then, for any genre $g$, we have
\begin{align*}
\sum_{j=1}^{k} w_j \left(\sum_{i\in R^{\le j} \setminus R^{\le j-1}} q_i(g) \right) = \sum_{j=1}^k \sum_{i \in R^{\le j} \setminus R^{\le j-1}} w_{\ell_R(i)} q_{i}(g) &= \sum_{i \in R^{\le k}} w_{\ell_R(i)} q_{i}(g) \\ 
&\le \sum_{i \in [K]} w_{\pi^{-1}(i)} q_{i}(g) \\
&= \sum_{j=1}^k  w_j q_{\pi(j)} (g).
\end{align*}

Then, since $G$ is non-decreasing with respect to all $q(g)$, we have $G(R) \le F_G(\pi)$. 
\end{proof}

\begin{restatable}{proposition}{maxGf}
\label{prop:maxGf}
$\max_{R \in \mathcal{I}} F_G(R) \ge \max G(\pi)$.
\end{restatable}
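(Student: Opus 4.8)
The plan is to establish the reverse direction of Proposition~\ref{prop:settoseq}: rather than turning a basis into a sequence, I would turn an optimal sequence into an independent set attaining the same objective value. A pleasant feature of this direction is that it produces an exact equality rather than a one-sided inequality, so it requires no monotonicity, SMDR, or other structural assumption on $G$ whatsoever.

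First I would fix an optimal length-$k$ list $\pi^* = \pi^*_1 \pi^*_2 \cdots \pi^*_k$ achieving $\max_\pi G(\pi)$, recalling that in this section the list contains no repeated items. I then define the candidate set $R \coloneqq \{(\pi^*_j, j) \mid j \in [k]\}$, which simply places the $j$-th item of the list at position $j$. To see that $R$ is independent in the laminar matroid $\mathcal{M}$, observe that for each $\ell \in [k]$ the elements of $R$ lying in $D_\ell$ are exactly $\{(\pi^*_j, j) \mid j \le \ell\}$, of which there are precisely $\ell$; hence $|R \cap D_\ell| = \ell \le \ell$, so $R \in \mathcal{I}$ (indeed $R$ is a basis).

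It then remains to compute $F_G(R)$ and match it to $G(\pi^*)$. Because position $j$ holds the single item $\pi^*_j$, we have $R^{\le j} = \{\pi^*_1, \dots, \pi^*_j\}$, and since the list has no repeats, $R^{\le j} \setminus R^{\le j-1} = \{\pi^*_j\}$ for every $j$. Consequently the induced subdistribution $\sum_{j=1}^k w_j \sum_{i \in R^{\le j} \setminus R^{\le j-1}} q_i$ collapses to $\sum_{j=1}^k w_j q_{\pi^*_j} = q(\pi^*)$, giving $F_G(R) = G(p, q(\pi^*)) = G(\pi^*)$. Since $R \in \mathcal{I}$, this yields $\max_{R \in \mathcal{I}} F_G(R) \ge F_G(R) = G(\pi^*) = \max_\pi G(\pi)$, as claimed.

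I do not anticipate a genuine obstacle here; this is the ``easy'' direction, and the only point demanding care is the bookkeeping of $R^{\le j} \setminus R^{\le j-1}$. The no-repeats assumption is precisely what guarantees that each item contributes exactly once, at its own position, so that the first-occurrence convention in the definition of $F_G$ reproduces $q(\pi^*)$ exactly. It is worth contrasting this with Proposition~\ref{prop:settoseq}, where the set-to-sequence map could strictly increase the objective and therefore relied on $G$ being non-decreasing in the coordinates of $q$; here equality holds unconditionally. Combining the two propositions shows that maximizing $F_G$ over $\mathcal{M}$ is equivalent to maximizing $G$ over sequences, which is exactly what licenses the continuous-greedy-plus-pipage-rounding approach to inherit its $(1-1/e)$ guarantee for the sequence problem.
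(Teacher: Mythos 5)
Your proposal is correct and is essentially the paper's own proof: the paper likewise maps an optimal list $\pi^*$ to the set $R^* = \{(\pi^*_j, j) \mid j \in [k]\}$ of its item--position pairs, asserts $F_G(R^*) = G(\pi^*)$ and $R^* \in \mathcal{I}$ ``by construction,'' and concludes the inequality. Your additional bookkeeping --- verifying $|R \cap D_\ell| = \ell$ for independence, and observing that the no-repeats assumption forces $R^{\le j} \setminus R^{\le j-1} = \{\pi^*_j\}$ so the induced subdistribution collapses exactly to $q(\pi^*)$ --- simply makes explicit what the paper leaves implicit.
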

\begin{proof}
Let $\pi^* \coloneqq \argmax_{\pi} G(\pi)$, and define $R^* \coloneqq \left\{ {((\pi^*)^{-1}(j),j)} \mid j \in [k] \right\}$ as the set corresponding to the item-position pairs in $\pi^*$.

By construction, $F_G(R^*) = G(\pi^*)$, and $R^* \in \mathcal{I}$. Then, by monotonicity the the maximum value over independent sets is attained by a basis, and we get
$$\max_{R \in \mathcal{I}} F_G(R)  \ge F_G(R^*) = G(\pi^*) = \max G(\pi).$$ 
\end{proof}

\begin{restatable}{theorem}{partitionthm}
\label{thm:sqrt1e}
There exists a $(1-1/e)$-approximation algorithm for the calibration problem with distributional genres using any SMDR overlap measure $G$.
\end{restatable}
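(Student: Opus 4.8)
The plan is to combine the two preceding propositions with the continuous-greedy and pipage-rounding machinery of \citet{calinescu2011} to convert near-optimal matroid maximization of $F_G$ into a near-optimal sequence. The key structural fact is that Propositions \ref{prop:settoseq} and \ref{prop:maxGf} together \emph{sandwich} the sequence optimum between quantities controlled by matroid optimization: the matroid optimum dominates the sequence optimum from above (Proposition \ref{prop:maxGf}), while any basis can be decoded into a sequence that does at least as well (Proposition \ref{prop:settoseq}). So once we can approximately maximize $F_G$ over the laminar matroid, we can transfer the guarantee to a list.

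First I would verify the hypotheses needed to invoke \cite{calinescu2011}. Because $G$ is SMDR, its associated set function $F_G$ is monotone and submodular by definition; and the laminar family $\{D_\ell\}$ yields a genuine matroid $\mathcal{M} = (V,\mathcal{I})$, with independence $|R\cap D_\ell|\le \ell$ for all $\ell$ (laminar matroids are a standard construction). Running continuous greedy on the multilinear extension of $F_G$ over the matroid polytope, followed by pipage rounding, then produces an independent set $R$ with $F_G(R)\ge (1-1/e)\max_{R'\in\mathcal{I}}F_G(R')$; monotonicity of $F_G$ lets us take $R$ to be a basis, which is exactly the form both propositions require.

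Next I would chain the inequalities. By Proposition \ref{prop:maxGf}, $\max_{R'\in\mathcal{I}}F_G(R')\ge \max_\pi G(\pi)$, so the rounded basis satisfies $F_G(R)\ge(1-1/e)\max_\pi G(\pi)$. Applying the decoding of Proposition \ref{prop:settoseq} to $R$ yields a length-$k$ list $\pi$ with $G(\pi)\ge F_G(R)$, and combining gives
$$G(\pi)\ge F_G(R)\ge (1-1/e)\max_{R'\in\mathcal{I}}F_G(R')\ge (1-1/e)\max_{\pi'}G(\pi'),$$
which is precisely the claimed approximation. The algorithm is therefore: build the laminar matroid, maximize $F_G$ over it via continuous greedy and pipage rounding, and decode the resulting basis into a recommendation list.

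The conceptual heavy lifting has already been done in the two propositions, so the main thing to be careful about in the theorem itself is that this reduction is genuinely two-sided and that both directions lean on the SMDR hypothesis in the right place. In particular, it is the monotonicity-in-$q$ half of SMDR (not merely submodularity) that powers the decoding step of Proposition \ref{prop:settoseq}, since moving each item to an earlier-or-equal position only increases the weighted subdistribution coordinatewise; without it a basis might decode to a strictly worse sequence and the chain would break. I would also flag the standard caveats that continuous greedy attains its bound in expectation (with the usual boosting for a high-probability statement) and that we need an efficient value oracle for $F_G$, which the explicit functional forms of the overlap measures in Section \ref{sec:overlap_families} supply.
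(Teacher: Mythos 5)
Your proposal is correct and follows essentially the same route as the paper's proof: it invokes Propositions~\ref{prop:maxGf} and~\ref{prop:settoseq} together with the continuous greedy and pipage rounding machinery of \citet{calinescu2011} on the laminar matroid, chaining the identical sequence of inequalities $G(\bar\pi) \ge F_G(\bar R) \ge (1-1/e)\max_{R\in\mathcal{I}} F_G(R) \ge (1-1/e)\max_\pi G(\pi)$. Your added observations --- that the monotonicity-in-$q$ half of SMDR is what powers the decoding step, and the standard caveats about expectation guarantees and value oracles --- are accurate refinements the paper leaves implicit, but the argument itself is the same.
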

\begin{proof}
Since $G$ is an SMDR overlap measure, $F_G$ is a monotone submodular function. Then, the continuous greedy algorithm and pipage rounding technique of \cite{calinescu2011} finds an independent set $\bar{R} \in \mathcal{I}$ such that $F_G(\bar{R}) \ge (1-1/e) \max_{R \in \mathcal{I}} F_G(R)$. We can assume $\bar R$ is a basis. By Proposition~\ref{prop:maxGf}, $F_G(\bar{R}) \ge (1-1/e)\max G(\pi)$.

Using Proposition~\ref{prop:settoseq}, we can convert $\bar{R}$ into a sequence $\bar{\pi}$ such that $G(\bar{\pi}) \ge F_G(\bar{R})$. Now observe that $G(\bar{\pi}) \ge (1-1/e)\max G(\pi)$, so we take $\bar{\pi}$ to be the output of the algorithm.
\end{proof}

%\end{document}
% \documentclass[../main.tex]{subfiles}

% \begin{document}
% !TEX root = main.tex

\section{Conclusion} \label{sec:conclusion}

In this paper, we have studied the problem of calibrating a recommendation list to match a user's interests, where user attention decays over the course of the list. We have introduced the notion of overlap measures, as a generalization of the measures used to quantify calibration under two different models of genre distributions. In the first model, where every item belongs to a single \emph{discrete genre}, by defining a property we call ordered submodularity and utilizing a careful bin-packing argument, we have shown that the greedy algorithm is a $2/3$-approximation. In the second model of \emph{distributional genres}, where each item has a fine-grained mixture of genre percentages, we have extended tools from constrained submodular optimization to supply a $(1-1/e)$-approximation algorithm. Prior work had highlighted the importance of the order of items due to attention decay but had left open the question of provable guarantees for calibration on these types of sequences; this prior work obtained guarantees only under the assumption that the ordering of items does not matter. Now, our work has provided the first performance guarantees for near-optimal calibration of recommendation \emph{lists}, working within the models
of user attention that form the underpinnings of applications in search and recommendation.

Finally, we highlight a number of directions for further work suggested by our results. First, it is interesting to consider the greedy algorithm for the calibration problem with discrete genres and ask whether the approximation bound of $2/3$ is tight, or if it can be sharpened using an alternative analysis technique. Additionally, we ask whether $(1-1/e)$ and $2/3$ are the best possible approximation guarantees possible for the distributional and discrete genre models, respectively, or if there exists a polynomial time approximation algorithm that achieves a stronger constant factor under either model. As noted earlier, both models of calibration with decaying attention are amenable to the general framework of submodular optimization, but these tools are limited to an approximation guarantee of $(1-1/e)$. In the discrete genre model, by using different techniques we surpass this barrier and obtain a stronger guarantee; might the same be possible in the distributional genre model?

To further investigate the performance of our algorithms, it may be useful to parametrize worst-case instances of the calibration problem, since we found through basic computational simulations that the greedy solution tends to be very close to optimal across many randomly generated instances (see Appendix~\ref{sec:comp_experiments} for details). Another potential direction is constructing additional families of overlap measures, or deriving a broader characterization of functional forms that satisfy the MDR and SMDR properties so that they may be used with our algorithms. As personalized recommendations become increasingly commonplace and explicitly optimized, the answers to these questions will be essential in developing tools to better understand the interplay between relevance, calibration, and other notions of diversity in these systems. 
% \end{document}

% Bibliography
\bibliographystyle{ACM-Reference-Format}
\bibliography{main.bib}

%%% -*-BibTeX-*-
%%% Do NOT edit. File created by BibTeX with style
%%% ACM-Reference-Format-Journals [18-Jan-2012].

\begin{thebibliography}{24}

%%% ====================================================================
%%% NOTE TO THE USER: you can override these defaults by providing
%%% customized versions of any of these macros before the \bibliography
%%% command.  Each of them MUST provide its own final punctuation,
%%% except for \shownote{}, \showDOI{}, and \showURL{}.  The latter two
%%% do not use final punctuation, in order to avoid confusing it with
%%% the Web address.
%%%
%%% To suppress output of a particular field, define its macro to expand
%%% to an empty string, or better, \unskip, like this:
%%%
%%% \newcommand{\showDOI}[1]{\unskip}   % LaTeX syntax
%%%
%%% \def \showDOI #1{\unskip}           % plain TeX syntax
%%%
%%% ====================================================================

\ifx \showCODEN    \undefined \def \showCODEN     #1{\unskip}     \fi
\ifx \showDOI      \undefined \def \showDOI       #1{#1}\fi
\ifx \showISBNx    \undefined \def \showISBNx     #1{\unskip}     \fi
\ifx \showISBNxiii \undefined \def \showISBNxiii  #1{\unskip}     \fi
\ifx \showISSN     \undefined \def \showISSN      #1{\unskip}     \fi
\ifx \showLCCN     \undefined \def \showLCCN      #1{\unskip}     \fi
\ifx \shownote     \undefined \def \shownote      #1{#1}          \fi
\ifx \showarticletitle \undefined \def \showarticletitle #1{#1}   \fi
\ifx \showURL      \undefined \def \showURL       {\relax}        \fi
% The following commands are used for tagged output and should be
% invisible to TeX
\providecommand\bibfield[2]{#2}
\providecommand\bibinfo[2]{#2}
\providecommand\natexlab[1]{#1}
\providecommand\showeprint[2][]{arXiv:#2}

\bibitem[Abdollahpouri et~al\mbox{.}(2020)]%
        {calAMBM20}
\bibfield{author}{\bibinfo{person}{Himan Abdollahpouri},
  \bibinfo{person}{Masoud Mansoury}, \bibinfo{person}{Robin Burke}, {and}
  \bibinfo{person}{Bamshad Mobasher}.} \bibinfo{year}{2020}\natexlab{}.
\newblock \showarticletitle{The Connection Between Popularity Bias,
  Calibration, and Fairness in Recommendation}. In
  \bibinfo{booktitle}{\emph{Proceedings of the 14th ACM Conference on
  Recommender Systems}} (Virtual Event, Brazil) \emph{(\bibinfo{series}{RecSys
  '20})}. \bibinfo{publisher}{Association for Computing Machinery},
  \bibinfo{address}{New York, NY, USA}, \bibinfo{pages}{726–731}.
\newblock
\showISBNx{9781450375832}
\urldef\tempurl%
\url{https://doi.org/10.1145/3383313.3418487}
\showDOI{\tempurl}


\bibitem[Agrawal et~al\mbox{.}(2009)]%
        {agrawal2009}
\bibfield{author}{\bibinfo{person}{Rakesh Agrawal}, \bibinfo{person}{Sreenivas
  Gollapudi}, \bibinfo{person}{Alan Halverson}, {and} \bibinfo{person}{Samuel
  Ieong}.} \bibinfo{year}{2009}\natexlab{}.
\newblock \showarticletitle{Diversifying search results}. In
  \bibinfo{booktitle}{\emph{Proceedings of the second ACM international
  conference on web search and data mining}}. \bibinfo{pages}{5--14}.
\newblock
\urldef\tempurl%
\url{https://dl.acm.org/doi/10.1145/1498759.1498766}
\showURL{%
\tempurl}


\bibitem[Alaei et~al\mbox{.}(2019)]%
        {Alaei19}
\bibfield{author}{\bibinfo{person}{Saeed Alaei}, \bibinfo{person}{Ali
  Makhdoumi}, {and} \bibinfo{person}{Azarakhsh Malekian}.}
  \bibinfo{year}{2019}\natexlab{}.
\newblock \bibinfo{title}{Maximizing Sequence-Submodular Functions and its
  Application to Online Advertising}.
\newblock
\newblock
\showeprint[arxiv]{1009.4153}~[cs.DM]


\bibitem[Asadpour et~al\mbox{.}(2022)]%
        {ec22seqsub}
\bibfield{author}{\bibinfo{person}{Arash Asadpour}, \bibinfo{person}{Rad
  Niazadeh}, \bibinfo{person}{Amin Saberi}, {and} \bibinfo{person}{Ali
  Shameli}.} \bibinfo{year}{2022}\natexlab{}.
\newblock \showarticletitle{Sequential Submodular Maximization and Applications
  to Ranking an Assortment of Products}. In
  \bibinfo{booktitle}{\emph{Proceedings of the 23rd ACM Conference on Economics
  and Computation}} (Boulder, CO, USA) \emph{(\bibinfo{series}{EC '22})}.
  \bibinfo{publisher}{Association for Computing Machinery},
  \bibinfo{address}{New York, NY, USA}, \bibinfo{pages}{817}.
\newblock
\showISBNx{9781450391504}
\urldef\tempurl%
\url{https://doi.org/10.1145/3490486.3538361}
\showDOI{\tempurl}


\bibitem[Ashkan et~al\mbox{.}(2015)]%
        {ashkan2015}
\bibfield{author}{\bibinfo{person}{Azin Ashkan}, \bibinfo{person}{Branislav
  Kveton}, \bibinfo{person}{Shlomo Berkovsky}, {and} \bibinfo{person}{Zheng
  Wen}.} \bibinfo{year}{2015}\natexlab{}.
\newblock \showarticletitle{Optimal greedy diversity for recommendation}. In
  \bibinfo{booktitle}{\emph{Twenty-Fourth International Joint Conference on
  Artificial Intelligence}}.
\newblock


\bibitem[Bernardini et~al\mbox{.}(2021)]%
        {Bernardini21}
\bibfield{author}{\bibinfo{person}{Sara Bernardini}, \bibinfo{person}{Fabio
  Fagnani}, {and} \bibinfo{person}{Chiara Piacentini}.}
  \bibinfo{year}{2021}\natexlab{}.
\newblock \showarticletitle{A unifying look at sequence submodularity}.
\newblock \bibinfo{journal}{\emph{Artificial Intelligence}}
  \bibinfo{volume}{297} (\bibinfo{year}{2021}), \bibinfo{pages}{103486}.
\newblock
\showISSN{0004-3702}
\urldef\tempurl%
\url{https://doi.org/10.1016/j.artint.2021.103486}
\showDOI{\tempurl}


\bibitem[Calinescu et~al\mbox{.}(2011)]%
        {calinescu2011}
\bibfield{author}{\bibinfo{person}{Gruia Calinescu}, \bibinfo{person}{Chandra
  Chekuri}, \bibinfo{person}{Martin Pal}, {and} \bibinfo{person}{Jan
  Vondr{\'a}k}.} \bibinfo{year}{2011}\natexlab{}.
\newblock \showarticletitle{Maximizing a monotone submodular function subject
  to a matroid constraint}.
\newblock \bibinfo{journal}{\emph{SIAM J. Comput.}} \bibinfo{volume}{40},
  \bibinfo{number}{6} (\bibinfo{year}{2011}), \bibinfo{pages}{1740--1766}.
\newblock


\bibitem[da~Silva and Durão(2022)]%
        {calCD22}
\bibfield{author}{\bibinfo{person}{Diego~Corrêa da Silva} {and}
  \bibinfo{person}{Frederico~Araújo Durão}.} \bibinfo{year}{2022}\natexlab{}.
\newblock \bibinfo{title}{Introducing a Framework and a Decision Protocol to
  Calibrate Recommender Systems}.
\newblock
\newblock
\urldef\tempurl%
\url{https://doi.org/10.48550/ARXIV.2204.03706}
\showDOI{\tempurl}


\bibitem[Fessenden(2018)]%
        {NNGfold}
\bibfield{author}{\bibinfo{person}{Therese Fessenden}.}
  \bibinfo{year}{2018}\natexlab{}.
\newblock \bibinfo{title}{Scrolling and Attention}.
\newblock
  \bibinfo{howpublished}{\url{https://www.nngroup.com/articles/scrolling-and-attention/}}.
\newblock
\newblock
\shownote{Accessed: 2022-02-04}.


\bibitem[Kowald et~al\mbox{.}(2023)]%
        {kowald2023study}
\bibfield{author}{\bibinfo{person}{Dominik Kowald}, \bibinfo{person}{Gregor
  Mayr}, \bibinfo{person}{Markus Schedl}, {and} \bibinfo{person}{Elisabeth
  Lex}.} \bibinfo{year}{2023}\natexlab{}.
\newblock \bibinfo{title}{A Study on Accuracy, Miscalibration, and Popularity
  Bias in Recommendations}.
\newblock
\newblock
\showeprint[arxiv]{2303.00400}~[cs.IR]


\bibitem[Liao et~al\mbox{.}(2022)]%
        {userstudy}
\bibfield{author}{\bibinfo{person}{Mengqi Liao}, \bibinfo{person}{S.~Shyam
  Sundar}, {and} \bibinfo{person}{Joseph B.~Walther}.}
  \bibinfo{year}{2022}\natexlab{}.
\newblock \showarticletitle{User Trust in Recommendation Systems: A Comparison
  of Content-Based, Collaborative and Demographic Filtering}. In
  \bibinfo{booktitle}{\emph{Proceedings of the 2022 CHI Conference on Human
  Factors in Computing Systems}} (New Orleans, LA, USA)
  \emph{(\bibinfo{series}{CHI '22})}. \bibinfo{publisher}{Association for
  Computing Machinery}, \bibinfo{address}{New York, NY, USA}, Article
  \bibinfo{articleno}{486}, \bibinfo{numpages}{14}~pages.
\newblock
\showISBNx{9781450391573}
\urldef\tempurl%
\url{https://doi.org/10.1145/3491102.3501936}
\showDOI{\tempurl}


\bibitem[Lin et~al\mbox{.}(2020)]%
        {cal2}
\bibfield{author}{\bibinfo{person}{Kun Lin}, \bibinfo{person}{Nasim Sonboli},
  \bibinfo{person}{Bamshad Mobasher}, {and} \bibinfo{person}{Robin Burke}.}
  \bibinfo{year}{2020}\natexlab{}.
\newblock \showarticletitle{Calibration in Collaborative Filtering Recommender
  Systems: A User-Centered Analysis}. In \bibinfo{booktitle}{\emph{Proceedings
  of the 31st ACM Conference on Hypertext and Social Media}} (Virtual Event,
  USA) \emph{(\bibinfo{series}{HT '20})}. \bibinfo{publisher}{Association for
  Computing Machinery}, \bibinfo{address}{New York, NY, USA},
  \bibinfo{pages}{197–206}.
\newblock
\showISBNx{9781450370981}
\urldef\tempurl%
\url{https://doi.org/10.1145/3372923.3404793}
\showDOI{\tempurl}


\bibitem[McNee et~al\mbox{.}(2006)]%
        {McNee2006}
\bibfield{author}{\bibinfo{person}{Sean~M. McNee}, \bibinfo{person}{John
  Riedl}, {and} \bibinfo{person}{Joseph~A. Konstan}.}
  \bibinfo{year}{2006}\natexlab{}.
\newblock \showarticletitle{Being Accurate is Not Enough: How Accuracy Metrics
  Have Hurt Recommender Systems}. In \bibinfo{booktitle}{\emph{CHI '06 Extended
  Abstracts on Human Factors in Computing Systems}} (Montr\'{e}al, Qu\'{e}bec,
  Canada) \emph{(\bibinfo{series}{CHI EA '06})}.
  \bibinfo{publisher}{Association for Computing Machinery},
  \bibinfo{address}{New York, NY, USA}, \bibinfo{pages}{1097–1101}.
\newblock
\showISBNx{1595932984}
\urldef\tempurl%
\url{https://doi.org/10.1145/1125451.1125659}
\showDOI{\tempurl}


\bibitem[Mitrovic et~al\mbox{.}(2018)]%
        {Mitrovic18}
\bibfield{author}{\bibinfo{person}{Marko Mitrovic}, \bibinfo{person}{Moran
  Feldman}, \bibinfo{person}{Andreas Krause}, {and} \bibinfo{person}{Amin
  Karbasi}.} \bibinfo{year}{2018}\natexlab{}.
\newblock \showarticletitle{Submodularity on Hypergraphs: From Sets to
  Sequences}. In \bibinfo{booktitle}{\emph{Proceedings of the Twenty-First
  International Conference on Artificial Intelligence and Statistics}}
  \emph{(\bibinfo{series}{Proceedings of Machine Learning Research},
  Vol.~\bibinfo{volume}{84})}, \bibfield{editor}{\bibinfo{person}{Amos Storkey}
  {and} \bibinfo{person}{Fernando Perez-Cruz}} (Eds.).
  \bibinfo{publisher}{PMLR}, \bibinfo{pages}{1177--1184}.
\newblock
\urldef\tempurl%
\url{https://proceedings.mlr.press/v84/mitrovic18a.html}
\showURL{%
\tempurl}


\bibitem[Naghiaei et~al\mbox{.}(2022)]%
        {calNRAS22}
\bibfield{author}{\bibinfo{person}{Mohammadmehdi Naghiaei},
  \bibinfo{person}{Hossein~A. Rahmani}, \bibinfo{person}{Mohammad Aliannejadi},
  {and} \bibinfo{person}{Nasim Sonboli}.} \bibinfo{year}{2022}\natexlab{}.
\newblock \bibinfo{title}{Towards Confidence-aware Calibrated Recommendation}.
\newblock
\newblock
\urldef\tempurl%
\url{https://doi.org/10.48550/ARXIV.2208.10192}
\showDOI{\tempurl}


\bibitem[Pan et~al\mbox{.}(2007)]%
        {pan2007google}
\bibfield{author}{\bibinfo{person}{Bing Pan}, \bibinfo{person}{Helene
  Hembrooke}, \bibinfo{person}{Thorsten Joachims}, \bibinfo{person}{Lori
  Lorigo}, \bibinfo{person}{Geri Gay}, {and} \bibinfo{person}{Laura Granka}.}
  \bibinfo{year}{2007}\natexlab{}.
\newblock \showarticletitle{In Google we trust: Users’ decisions on rank,
  position, and relevance}.
\newblock \bibinfo{journal}{\emph{Journal of computer-mediated communication}}
  \bibinfo{volume}{12}, \bibinfo{number}{3} (\bibinfo{year}{2007}),
  \bibinfo{pages}{801--823}.
\newblock


\bibitem[Seymen et~al\mbox{.}(2021)]%
        {calSAM21}
\bibfield{author}{\bibinfo{person}{Sinan Seymen}, \bibinfo{person}{Himan
  Abdollahpouri}, {and} \bibinfo{person}{Edward~C. Malthouse}.}
  \bibinfo{year}{2021}\natexlab{}.
\newblock \showarticletitle{A Constrained Optimization Approach for Calibrated
  Recommendations}. In \bibinfo{booktitle}{\emph{Proceedings of the 15th ACM
  Conference on Recommender Systems}} (Amsterdam, Netherlands)
  \emph{(\bibinfo{series}{RecSys '21})}. \bibinfo{publisher}{Association for
  Computing Machinery}, \bibinfo{address}{New York, NY, USA},
  \bibinfo{pages}{607–612}.
\newblock
\showISBNx{9781450384582}
\urldef\tempurl%
\url{https://doi.org/10.1145/3460231.3478857}
\showDOI{\tempurl}


\bibitem[Steck(2018)]%
        {Steck18}
\bibfield{author}{\bibinfo{person}{Harald Steck}.}
  \bibinfo{year}{2018}\natexlab{}.
\newblock \showarticletitle{Calibrated Recommendations}. In
  \bibinfo{booktitle}{\emph{Proceedings of the 12th ACM Conference on
  Recommender Systems}} (Vancouver, British Columbia, Canada)
  \emph{(\bibinfo{series}{RecSys '18})}. \bibinfo{publisher}{Association for
  Computing Machinery}, \bibinfo{address}{New York, NY, USA},
  \bibinfo{pages}{154–162}.
\newblock
\showISBNx{9781450359016}
\urldef\tempurl%
\url{https://doi.org/10.1145/3240323.3240372}
\showDOI{\tempurl}


\bibitem[Streeter and Golovin(2008)]%
        {StreeterGolovin}
\bibfield{author}{\bibinfo{person}{Matthew Streeter} {and}
  \bibinfo{person}{Daniel Golovin}.} \bibinfo{year}{2008}\natexlab{}.
\newblock \showarticletitle{An online algorithm for maximizing submodular
  functions}.
\newblock \bibinfo{journal}{\emph{Advances in Neural Information Processing
  Systems}}  \bibinfo{volume}{21} (\bibinfo{year}{2008}).
\newblock


\bibitem[Tschiatschek et~al\mbox{.}(2017)]%
        {Tschiatschek2017}
\bibfield{author}{\bibinfo{person}{Sebastian Tschiatschek},
  \bibinfo{person}{Adish Singla}, {and} \bibinfo{person}{Andreas Krause}.}
  \bibinfo{year}{2017}\natexlab{}.
\newblock \showarticletitle{Selecting sequences of items via submodular
  maximization}. In \bibinfo{booktitle}{\emph{Thirty-First AAAI Conference on
  Artificial Intelligence}}.
\newblock


\bibitem[Udwani(2021)]%
        {udwani21order}
\bibfield{author}{\bibinfo{person}{Rajan Udwani}.}
  \bibinfo{year}{2021}\natexlab{}.
\newblock \bibinfo{title}{Submodular Order Functions and Assortment
  Optimization}.
\newblock
\newblock
\urldef\tempurl%
\url{https://doi.org/10.48550/ARXIV.2107.02743}
\showDOI{\tempurl}


\bibitem[Williams(2012)]%
        {williamsZipf}
\bibfield{author}{\bibinfo{person}{Hugh~E. Williams}.}
  \bibinfo{year}{2012}\natexlab{}.
\newblock \bibinfo{title}{Clicks in search}.
\newblock
  \bibinfo{howpublished}{\url{https://hughewilliams.com/2012/04/12/clicks-in-search/}}.
\newblock
\newblock
\shownote{Accessed: 2022-02-04}.


\bibitem[Zhang et~al\mbox{.}(2022)]%
        {zhang2022ranking}
\bibfield{author}{\bibinfo{person}{Guangyi Zhang}, \bibinfo{person}{Nikolaj
  Tatti}, {and} \bibinfo{person}{Aristides Gionis}.}
  \bibinfo{year}{2022}\natexlab{}.
\newblock \showarticletitle{Ranking with submodular functions on a budget}.
\newblock \bibinfo{journal}{\emph{Data mining and knowledge discovery}}
  \bibinfo{volume}{36}, \bibinfo{number}{3} (\bibinfo{year}{2022}),
  \bibinfo{pages}{1197--1218}.
\newblock


\bibitem[Zhang et~al\mbox{.}(2016)]%
        {Zhang16}
\bibfield{author}{\bibinfo{person}{Zhenliang Zhang}, \bibinfo{person}{Edwin
  K.~P. Chong}, \bibinfo{person}{Ali Pezeshki}, {and} \bibinfo{person}{William
  Moran}.} \bibinfo{year}{2016}\natexlab{}.
\newblock \showarticletitle{String Submodular Functions With Curvature
  Constraints}.
\newblock \bibinfo{journal}{\emph{IEEE Trans. Automat. Control}}
  \bibinfo{volume}{61}, \bibinfo{number}{3} (\bibinfo{year}{2016}),
  \bibinfo{pages}{601--616}.
\newblock
\urldef\tempurl%
\url{https://doi.org/10.1109/TAC.2015.2440566}
\showDOI{\tempurl}


\end{thebibliography}

% Appendix
\newpage
\appendix
% !TEX root = main.tex
% \documentclass[../main.tex]{subfiles}

% \begin{document}
\section{Survey of existing submodularity frameworks} \label{sec:extended_survey}
\cite{Alaei19} and \cite{Zhang16} introduce \emph{sequence-submodularity} and \emph{string submodularity}, with $(1-1/e)$-approximate greedy algorithms matching traditional submodularity. Both definitions require extremely strong mononicity conditions including \emph{postfix monotonicity}, which states that for any sequences $A$ and $B$ and their concatenation $A||B$, it must hold that $f(A||B) \ge f(B)$ \cite{StreeterGolovin}. But this frequently is not a natural property to assume (for instance, prepending a ``bad'' movie to the front of a list and forcing ``good'' movies to move downwards will not improve calibration).

\cite{Tschiatschek2017} and \cite{Mitrovic18} study submodularity in sequences using graphs and hypergraphs, respectively. However, this approach only models simple sequential dependencies between individual items, not more complex phenomena such as attention decay.

\cite{Bernardini21} propose a framework in which the set of all elements has a total ordering according to some property $g$. Denoting the subsequence of the first $i$ elements in the list as $S_i$, they consider functions of the form $f(s_1 \dots s_k) = \sum_{i=1}^k g(s_i) \cdot [ h(S_i) - h(S_{i-1}) ]$ for any function $g$ and any monotone submodular set function $h$. The marginal increase due to each element is weighted solely based on its \emph{identity} and not its \emph{rank} (in contrast to the rank-based weights of \cite{Steck18}); while a valid assumption in some applications, this does not hold for sequential attention decay.

\cite{ec22seqsub} study the maximization of \emph{sequential submodular} functions of the specific form $f(S) = \sum_{i=1}^k g_i \cdot h_i (S_i)$, which also include the rank-based weighted marginal increase functions referred to above, and provide a $(1-1/e)$-approximation via a matroid reduction and continuous greedy algorithm. \cite{zhang2022ranking} formulate the \emph{max-submodular ranking} problem, a generalization which incorporates budget constraints on the lengths of the prefixes $S_i$, and study versions of the greedy algorithm under varying conditions. However, these techniques do not extend readily to functions that cannot be expressed using sums of increasing nested subsequences $S_i$.

Lastly, we note that \cite{udwani21order} develops the similarly named class of \emph{submodular order functions}. However, these are \emph{set functions} that display a limited form of submodularity only with respect to a certain permutation (or ordering) of the ground set. As such, this property is designed for optimization over sets, not lists, and does not model the calibration problem at hand.

%%%%%%%%%%%%%%%%%%%%%%%%%%%%%%%%%

\section{Deferred proofs}\label{sec:proofs}
Here, we provide complete proofs deferred from Sections~\ref{sec:calibration_overlap}, \ref{sec:distributional}, and \ref{sec:discrete} in the main text. 

\subsection{Proofs and discussion from Section~\ref{sec:overlap_families}} \label{sec:proofsMDR}

\fdivergence*
\begin{proof}[Proof of monotonicity]
For any set $R$ and each item $j$, we define $\ell_R (i)\coloneqq \min \left\{ j\in [k] |  (i,j) \in R \right\}$ (or $\ell_R(i) = k+1$ if no such $j$ exists) as the earliest position in which $R$ places item $i$. We also define $w_{k+1} = 0$. 

 Observe that $R^{\le j}  \setminus R^{\le j-1}$ is exactly the set of items which appear for the \emph{first} time in position $j$; thus $\ell_R(i) = j$ for all $i \in R^{\le j} \setminus R^{\le j-1}$. Additionally, $R^{\le 1} \subseteq R^{\le 2} \subseteq \dots \subseteq R^{\le k}$.

Consider a superset $T \supseteq R$. It is clear from the definition that $\ell_T(i) \le \ell_R(i)$ for all $i$, so $w_{\ell_T(i)} \ge w_{\ell_R(i)}$. Further, $R^{\le k} \subseteq T^{\le k}$. Then, for any $x$, we have
\begin{align*}
\sum_{j=1}^{k} w_j \left(\sum_{i\in R^{\le j} \setminus R^{\le j-1}} q_i(x) \right) &= \sum_{j=1}^k \sum_{i \in R^{\le j} \setminus R^{\le j-1}} w_{\ell_R(i)} q_{i}(x) \\
&= \sum_{i \in R^{\le k}} w_{\ell_R(i)} q_{i}(x) \\ 
&\le \sum_{i \in T^{\le k}} w_{\ell_T(i)} q_{i}(x) \\
&= \sum_{j=1}^k w_j \left(\sum_{i \in T^{\le j} \setminus T^{\le j-1}} q_{i}(x) \right),
\end{align*}
so $q^{(R)}$ is coordinate-wise dominated by $q^{(T)}$. Since $D_f$ is monotone over subdistributions, we have $D_f(p, q^{(R)}) \ge D_f(p, q^{(T)}) \implies F_G(R) \le F_G(T)$. 
\end{proof}

\begin{proof}[Proof of submodularity] 
We now show that for any sets $R\subseteq T$ and element $(a,b)$, $F_G(R \cup \{(a,b)\}) - F_G(R) \ge F_G(T\cup \{(a,b)\}) - F_G(T)$. If $(a,b) \in T$, the inequality clearly holds since $F_G$ is monotone, so suppose $(a,b) \notin T$. To simplify notation, we also write $R' = R\cup \{(a,b)\}$, $T' = T\cup \{(a,b)\}$. Observe that for all $i\ne a$, $\ell_{R'}(i) = \ell_R(i) \ge \ell_T(i) = \ell_{T'}(i)$. We also have $\ell_{R'}(a) \le \ell_R(a)$ and $\ell_{T'}(a) \le \ell_T(a)$. 

We claim that for all $x$, $q^{(R')}(x) - q^{(R)}(x) \ge q^{(T')}(x) - q^{(T)}(x)$. We take three cases: 

\paragraph*{Case 1: $a \in R^{\le k} (\subseteq T^{\le k})$.}

Then $(R')^{\le k} = R^{\le k}$ and $(T')^{\le k} = T^{\le k}$. We take subcases based on $\ell_{T}(a)$, and show that in both subcases, $w_{\ell_{R'}(a)} - w_{\ell_R(a)} \ge w_{\ell_{T'}(a)} - w_{\ell_T(a)}$.
\begin{itemize}
    \item $\ell_T(a) \le b$: Then $\ell_{T'}(a) = \ell_T(a)$, while $\ell_{R'}(a) \le \ell_R(a)$. So $w_{\ell_{R'}(a)} - w_{\ell_{R}(a)} \ge 0 = w_{\ell_{T'}(a)} - w_{\ell_{T}(a)}$.
    
    \item $b< \ell_T(a)$ ($\le \ell_R(a)$): Then $\ell_{R'}(a) = \ell_{T'}(a) = b$, so $w_{\ell_{R'}(a)} - w_{\ell_R(a)} = w_b -  w_{\ell_R(a)} \ge w_b -  w_{\ell_T(a)} \ge w_{\ell_{T'}(a)} - w_{\ell_T(a)}$.
\end{itemize}
Then, we have
\begin{align*}
q^{(R')}(x) - q^{(R)}(x) &= \sum_{i \in (R')^{\le k}} w_{\ell_{R'} (i)} q_i(x) - \sum_{i \in R^{\le k}} w_{\ell_R (i)} q_i(x) \\
&= \sum_{i \in R^{\le k}} w_{\ell_{R'} (i)} q_i(x) - \sum_{i \in R^{\le k}} w_{\ell_R (i)} q_i(x) \\
&= (w_{\ell_{R'} (a)} - w_{\ell_{R} (a)}) q_a(x) \\
&\ge (w_{\ell_{T'} (a)} - w_{\ell_{T} (a)}) q_a(x) \\
&= \sum_{i \in (T')^{\le k}} w_{\ell_{T'} (i)} q_i(x) - \sum_{i \in T^{\le k}} w_{\ell_T (i)} q_i(x) \\
&= q^{(T')}(x) - q^{(T)}(x).
\end{align*}

\paragraph*{Case 2: $a\notin R^{\le k}$, $a\in T^{\le k}$.}

In this case, $(R')^{\le k} = R^{\le k} \cup \{a\}$ and $(T')^{\le k} = T^{\le k}$. We have $\ell_{R'}(a) = b$. If $\ell_T(a) < b$, then $\ell_{T'}(a) = \ell_T(a)$, so $w_{\ell_{T'}(a)} - w_{\ell_T(a)} = 0$. If $\ell_T(a) \ge b$, then $\ell_{T'}(a) = b$, so $w_{\ell_{T'}(a)} - w_{\ell_T(a)} = w_b - w_{\ell_T(a)}$. In either case, we have $w_{\ell_{T'}(a)} - w_{\ell_T(a)} \le w_b$, so we have 
\begin{align*}
q^{(R')}(x) - q^{(R)}(x) &= \sum_{i \in (R')^{\le k}} w_{\ell_{R'} (i)} q_i(x) - \sum_{i \in R^{\le k}} w_{\ell_R (i)} q_i(x)\\
&= \left(w_b q_a(x) + \sum_{i \in R^{\le k}} w_{\ell_{R} (i)} q_i(x)\right) \\
    &\quad - \sum_{i \in R^{\le k}} w_{\ell_R (i)} q_i(x) \\
&= w_b q_a(x) \\
&\ge (w_{\ell_{T'}(a)} - w_{\ell_T(a)}) q_a(x) \\
& =\sum_{i \in (T')^{\le k}} w_{\ell_{T'} (i)} q_i(x) - \sum_{i \in T^{\le k}} w_{\ell_T (i)} q_i(x) \\
&= q^{(T')}(x) - q^{(T)}(x).
\end{align*}

\paragraph*{Case 3: $a \notin T^{\le k} (\supseteq R^{\le k})$.}

In this case, $(R')^{\le k} = R^{\le k} \cup \{a\}$ and $(T')^{\le k} = T^{\le k} \cup\{a\}$, and $\ell_{R'}(a) = \ell_{T'}(a) = b$. Then,
\begin{align*}
q^{(R')}(x) - q^{(R)}(x) &= \sum_{i \in (R')^{\le k}} w_{\ell_{R'} (i)} q_i(x) - \sum_{i \in R^{\le k}} w_{\ell_R (i)} q_i(x) \\
&= \left(w_b q_a(x) + \sum_{i \in R^{\le k}} w_{\ell_{R} (i)} q_i(x)\right) \\
    &\quad - \sum_{i \in R^{\le k}} w_{\ell_R (i)} q_i(x) \\
&= w_b q_a(x) \\
&=\sum_{i \in (T')^{\le k}} w_{\ell_{T'} (i)} q_i(x) - \sum_{i \in T^{\le k}} w_{\ell_T (i)} q_i(x) \\
&= q^{(T')}(x) - q^{(T)}(x).
\end{align*}

Lastly, we can compute
\begin{align*}
\frac{\partial G_{D_f}}{\partial q(x)} &= -\frac{\partial}{\partial q(x)} \left( f\left(\frac{p(x)}{q(x)}\right) q(x) \right) \\
&= f'\left(\frac{p(x)}{q(x)}\right) \cdot \frac{p(x)}{q(x)^2} \cdot q(x) - f\left(\frac{p(x)}{q(x)}\right)\cdot 1 \\
&= f'\left(\frac{p(x)}{q(x)}\right) \cdot \frac{p(x)}{q(x)} - f\left(\frac{p(x)}{q(x)}\right),\\
\frac{\partial^2 G_{D_f}}{\partial q(x)^2 } &= -f''\left(\frac{p(x)}{q(x)}\right) \cdot \frac{p(x)}{q(x)^2} \cdot \frac{p(x)}{q(x)} - f'\left(\frac{p(x)}{q(x)}\right) \cdot \frac{p(x)}{q(x)^2} \\
    &\quad+ f'\left(\frac{p(x)}{q(x)}\right) \cdot \frac{p(x)}{q(x)^2} \\
&= -f''\left(\frac{p(x)}{q(x)}\right) \cdot \frac{p(x)^2}{q(x)^3} \\
&\le 0,
\end{align*}
since $p(x), q(x), f'' \ge 0$ (by convexity of $f$), so $G_{D_f}$ is concave in every $q(x)$. 

In all three cases above, we have $q^{(R')}(x) - q^{(R)}(x) \ge q^{(T')}(x) - q^{(T)}(x)$. We also have $q^{(R)}(x) \le q^{(T)}(x)$ (shown earlier). Thus by concavity we conclude that $G_{D_f}(p, q^{(R')}) - G_{D_f}(p, q^{(R)}) \ge G_{D_f}(p, q^{(T')}) - G_{D_f}(p, q^{(T)})  \implies F_G(R \cup \{(a,b)\}) - F_G(R) \ge F_G(T \cup \{(a,b)\}) - F_G(T)$.
\end{proof}

\hoverlap*
\begin{proof}[Construction]
We begin by considering overlap measures of the form $G(p,q) = \sum_{x\in \Omega} g_1(p(x)) \cdot g_2(q(x))$ for nonnegative functions $g_1$ and $g_2$. Given a non-decreasing concave $g_2$, we fully specify the overlap measure by choosing $g_1$ such that $G$ is uniquely maximized when $q=p$.

That is, we consider the constrained maximization of $\sum_{i=1}^g g_1(p_i) \cdot g_2 (q_i)$, subject to $\sum_{i=1}^g q_i \le 1$. By placing a Lagrange multiplier of $\lambda$ on the constraint, we see that the maximum occurs when $g_1 (p_i) \cdot g_2' (q_i) = \lambda$ for all $i$. Since we would like this to be satisfied when $q_i = p_i$ for all $i$, and we can scale the overlap measure by a multiplicative constant without loss, it suffices to set $g_1$ identically to $\frac{1}{g_2'}$. Rewriting using $g_2 = h$ gives the overlap measure $G(p,q) = \sum_{x\in \Omega} \frac{h(q(x))}{h'(p(x))}$. Since $p$ is given as a fixed distribution and $h$ is non-decreasing, it is clear that $G$ is non-decreasing in all $q(x)$. Now, we will show that $G$ is MDR by analyzing $F_G$.
\end{proof}

\begin{proof}[Proof of monotonicity] For any set $R$ and each item $j$, we define $\ell_R (i)\coloneqq \min \left\{ j\in [k] |  (i,j) \in R \right\}$ (or $\ell_R(i) = k+1$ if no such $j$ exists) as the earliest position in which $R$ places item $i$. We also define $w_{k+1} = 0$. 

 Observe that $R^{\le j}  \setminus R^{\le j-1}$ is exactly the set of items which appear for the \emph{first} time in position $j$; thus $\ell_R(i) = j$ for all $i \in R^{\le j} \setminus R^{\le j-1}$. Additionally, $R^{\le 1} \subseteq R^{\le 2} \subseteq \dots \subseteq R^{\le k}$. Then, we may write

\begin{align*}
    F_G(R) &= \sum_{x \in \Omega} \frac{1}{h'(p(x))} \cdot h\left(\sum_{j=1}^{k} w_j \left(\sum_{i\in R^{\le j} \setminus R^{\le j-1}} q_i(x) \right) \right) \\
    &= \sum_{x \in \Omega} \frac{1}{h'(p(x))} \cdot h\left(\sum_{j=1}^k \sum_{i \in R^{\le j} \setminus R^{\le j-1}} w_{\ell_R(i)} q_{i}(x) \right) \\
    &= \sum_{x \in \Omega} \frac{1}{h'(p(x))} \cdot h\left(\sum_{i \in R^{\le k}} w_{\ell_R(i)} q_{i}(x)\right).
\end{align*}

Now, consider a superset $T \supseteq R$. It is clear from the definition that $\ell_T(i) \le \ell_R(i)$ for all $i$, so $w_{\ell_T(i)} \ge w_{\ell_R(i)}$. Further, $R^{\le k} \subseteq T^{\le k}$. Then, for any $x$,
$$\sum_{i \in R^{\le k}} w_{\ell_R(i)} q_i(x) \le \sum_{i \in T^{\le k}} w_{\ell_R(i)} q_i(x) \le \sum_{i \in T^{\le k}} w_{\ell_T(i)} q_i(x).$$ 
Since $h$ is nonnegative and non-decreasing, the inequality is preserved by applying $h$ and then multiplying by $\frac{1}{h'(p(x))}$. Summing over all $x\in\Omega$ gives $F_G(R) \le F_G(T)$.
\end{proof}

\begin{proof}[Proof of submodularity]
We now show that for any sets $R\subseteq T$ and element $(a,b)$, $F_G(R \cup \{(a,b)\}) - F_G(R) \ge F_G(T\cup \{(a,b)\}) - F_G(T)$. If $(a,b) \in T$, the inequality clearly holds since $F_G$ is monotone, so suppose $(a,b) \notin T$. To simplify notation, we also write $R' = R\cup \{(a,b)\}$, $T' = T\cup \{(a,b)\}$. Observe that for all $i\ne a$, $\ell_{R'}(i) = \ell_R(i) \ge \ell_T(i) = \ell_{T'}(i)$. We also have $\ell_{R'}(a) \le \ell_R(a)$ and $\ell_{T'}(a) \le \ell_T(a)$. We now take cases:

\paragraph*{Case 1: $a \in R^{\le k} (\subseteq T^{\le k})$.}

Then $(R')^{\le k} = R^{\le k}$ and $(T')^{\le k} = T^{\le k}$. We take subcases based on $\ell_{T}(a)$, and show that in both subcases, $w_{\ell_{R'}(a)} - w_{\ell_R(a)} \ge w_{\ell_{T'}(a)} - w_{\ell_T(a)}$.
\begin{itemize}
    \item $\ell_T(a) \le b$: Then $\ell_{T'}(a) = \ell_T(a)$, while $\ell_{R'}(a) \le \ell_R(a)$. So $w_{\ell_{R'}(a)} - w_{\ell_{R}(a)} \ge 0 = w_{\ell_{T'}(a)} - w_{\ell_{T}(a)}$.
    
    \item $b< \ell_T(a)$ ($\le \ell_R(a)$): Then $\ell_{R'}(a) = \ell_{T'}(a) = b$, so $w_{\ell_{R'}(a)} - w_{\ell_R(a)} = w_b -  w_{\ell_R(a)} \ge w_b -  w_{\ell_T(a)} \ge w_{\ell_{T'}(a)} - w_{\ell_T(a)}$.
\end{itemize}
So, we have
\begin{align*}
&h\left(\sum_{i \in (R')^{\le k}} w_{\ell_{R'} (i)} q_i(x)\right) - h\left(\sum_{i \in R^{\le k}} w_{\ell_R (i)} q_i(x)\right) \\
&= h\left(\sum_{i \in R^{\le k}} w_{\ell_{R'} (i)} q_i(x)\right) - h\left(\sum_{i \in R^{\le k}} w_{\ell_R (i)} q_i(x)\right) \\
&= h\left((w_{\ell_{R'} (a)} - w_{\ell_{R} (a)}) q_a(x) + \sum_{i \in R^{\le k}} w_{\ell_{R} (i)} q_i(x)\right) \\
    &\quad- h\left(\sum_{i \in R^{\le k}} w_{\ell_R (i)} q_i(x)\right) \\
&\ge h\left((w_{\ell_{T'} (a)} - w_{\ell_{T} (a)}) q_a(x) + \sum_{i \in R^{\le k}} w_{\ell_{R} (i)} q_i(x)\right) \\
    &\quad- h\left(\sum_{i \in R^{\le k}} w_{\ell_R (i)} q_i(x)\right) \\
&\ge h\left((w_{\ell_{T'} (a)} - w_{\ell_{T} (a)}) q_a(x) + \sum_{i \in T^{\le k}} w_{\ell_{T} (i)} q_i(g)\right) \\
    &\quad- h\left(\sum_{i \in T^{\le k}} w_{\ell_T (i)} q_i(x)\right) \\
&=h\left(\sum_{i \in (T')^{\le k}} w_{\ell_{T'} (i)} q_i(x)\right) - h\left(\sum_{i \in T^{\le k}} w_{\ell_T (i)} q_i(x)\right),
\end{align*}
where the first inequality is from the subcase analysis (and monotonicity of $h$) and the second inequality holds by concavity of $h$.

\paragraph*{Case 2: $a\notin R^{\le k}$, $a\in T^{\le k}$.}

In this case, $(R')^{\le k} = R^{\le k} \cup \{a\}$ and $(T')^{\le k} = T^{\le k}$. We have $\ell_{R'}(a) = b$. If $\ell_T(a) < b$, then $\ell_{T'}(a) = \ell_T(a)$, so $w_{\ell_{T'}(a)} - w_{\ell_T(a)} = 0$. If $\ell_T(a) \ge b$, then $\ell_{T'}(a) = b$, so $w_{\ell_{T'}(a)} - w_{\ell_T(a)} = w_b - w_{\ell_T(a)}$. In either case, we have $w_{\ell_{T'}(a)} - w_{\ell_T(a)} \le w_b$, so we have 
\begin{align*}
&h\left(\sum_{i \in (R')^{\le k}} w_{\ell_{R'} (i)} q_i(x)\right) - h\left(\sum_{i \in R^{\le k}} w_{\ell_R (i)} q_i(x)\right) \\
&= h\left(w_b q_a(x) + \sum_{i \in R^{\le k}} w_{\ell_{R} (i)} q_i(x)\right) - h\left(\sum_{i \in R^{\le k}} w_{\ell_R (i)} q_i(x)\right) \\
&\ge h\left((w_{\ell_{T'}(a)} - w_{\ell_T(a)}) q_a(x) + \sum_{i \in R^{\le k}} w_{\ell_{R} (i)} q_i(x)\right) \\
    &\quad- h\left(\sum_{i \in R^{\le k}} w_{\ell_R (i)} q_i(x)\right) \\
&\ge h\left((w_{\ell_{T'}(a)} - w_{\ell_T(a)}) q_a(x) + \sum_{i \in T^{\le k}} w_{\ell_{T} (i)} q_i(x)\right) \\
    &\quad- h\left(\sum_{i \in T^{\le k}} w_{\ell_T (i)} q_i(x)\right) \\
&=h\left(\sum_{i \in (T')^{\le k}} w_{\ell_{T'} (i)} q_i(x)\right) - h\left(\sum_{i \in T^{\le k}} w_{\ell_T (i)} q_i(x)\right),
\end{align*}
where the first inequality holds due to the subcase analysis, and the second inequality is due to concavity of $h$. 

\paragraph*{Case 3: $a \notin T^{\le k} (\supseteq R^{\le k})$.}

In this case, $(R')^{\le k} = R^{\le k} \cup \{a\}$ and $(T')^{\le k} = T^{\le k} \cup\{a\}$, and $\ell_{R'}(a) = \ell_{T'}(a) = b$. Then,
\begin{align*}
&h\left(\sum_{i \in (R')^{\le k}} w_{\ell_{R'} (i)} q_i(x)\right) - h\left(\sum_{i \in R^{\le k}} w_{\ell_R (i)} q_i(x)\right) \\
&= h\left(w_b q_a(x) + \sum_{i \in R^{\le k}} w_{\ell_{R} (i)} q_i(x)\right) - h\left(\sum_{i \in R^{\le k}} w_{\ell_R (i)} q_i(x)\right) \\
&\ge h\left(w_b q_a(x) + \sum_{i \in T^{\le k}} w_{\ell_{T} (i)} q_i(x)\right) - h\left(\sum_{i \in T^{\le k}} w_{\ell_T (i)} q_i(x)\right) \\
&=h\left(\sum_{i \in (T')^{\le k}} w_{\ell_{T'} (i)} q_i(x)\right) - h\left(\sum_{i \in T^{\le k}} w_{\ell_T (i)} q_i(x)\right),
\end{align*}
where the inequality is due to concavity of $h$.

In all cases, we finish by multiplying by $\frac{1}{h'(p(x))}$ and summing over all $x\in\Omega$ to give $$F_G(R\cup \{(a,b)\}) - F_G(R) \ge F_G(T\cup \{(a,b)\}) - F_G(T).$$
\end{proof}

\subsection{Proofs and discussion from Section~\ref{sec:distributional}} \label{sec:proofsdistributional}

\paragraph*{A $(1-1/e)$-approximation with repeated elements.} If we permit our recommendation list to include repeated elements, a sequence is simply an assignment of (at most) one item to each position. Then, a $(1-1/e)$-approximation algorithm is possible using a subroutine of maximizing submodular functions over a matroid due to \cite{calinescu2011}.

More formally, consider the partition of the ground set into the sets $V = \bigcup_{\ell \in [k]} E_\ell$, where $E_\ell \coloneqq \{(i,j) \mid  i \in [K], j = \ell\}$, and the independent sets $\mathcal{I} \coloneqq \left\{R\subseteq V \mid |R\cap E_\ell| \le 1, \; \forall \ell \in [k] \right\}$. 

The set function $F_G(R)$ as defined in Section \ref{sec:mdr_overlap} only considers each item to contribute to the distribution at the first position that it occurs. If we do want to consider sequences with repeated elements, an alternate definition is more useful:
\[ \hat F_G(R) \coloneqq G\left(p, \sum_{j=1}^{k} w_j \left(\sum_{i\in R^j } q_i\right)\right), \]
where $R^j=\{i \in [K] \mid (i,j)\in R\}$. For this subsection, we assume that $\hat{F}_G$ is monotone and submodular, which is indeed the case for each of the applications discussed in Section \ref{sec:overlap_families}. Since $\hat F_G$ is a monotone submodular set function, the continuous greedy algorithm and pipage rounding technique of \cite{calinescu2011} for submodular function maximization subject to a single matroid constraint yields a $(1-1/e)$-approximate independent set, which will be a basis of the matroid due to monotonicity. There is a straightforward bijection between sequences and bases of the matroid $(V, \mathcal{I})$, and maximizing $G$ over sequences is equivalent to maximizing $\hat F_G$ subject to a partition matroid constraint.

However, in a more realistic application, we would prefer our recommendation list not to repeat items (repeating a single movie many times would hardly constitute a ``diverse'' recommendation list, no matter how heterogeneous the distribution of that particular movie). At the same time, each item's precise genre breakdown is likely to be unique, so merely replacing a repeated item with a close substitute may be impossible. Instead, we explicitly forbid repeats, meaning that independent sets of the partition matroid described above no longer all correspond to legal sequences. Although the partition matroid reduction no longer works, the set of SMDR overlap measures enables an alternate laminar matroid reduction which recovers the $(1-1/e)$ guarantee.

\settoseq*
\begin{proof}
For every item $i$, we define $\ell_R (i)$ to be the first position that $i$ occurs in $R$; that is, $\ell_R (i)\coloneqq \min \left\{ j\in [k] |  (i,j) \in R \right\}$, or $\ell_R(i) = k+1$ if no such $j$ exists. We also introduce the notation of $w_{k+1}=0$. Sort the items in increasing order of $\ell_R(\cdot)$ (breaking ties arbitrarily), and call this sequence $\pi$. We claim that $G(\pi) \ge F_G(R)$.

Consider an arbitrary item $j$. By definition of the laminar matroid, 
$$|R \cap D_{\ell_R(i)}| = \sum_{y = 1}^k \sum_{x=1}^{\ell_R (i)} \1{(x,y) \in R} \le \ell_R(i). $$ 
The summation is an upper bound on the number of items $x$ with $(x,y) \in R$ for some $y \le \ell_R (i)$. But these are exactly the items with $\ell_R(x) \le \ell_R(i)$ (including $i$ itself), and therefore the items that can appear before $i$ in $\pi$. So the position at which $i$ appears in $\pi$, denoted $\pi^{-1}(i)$, is less than or equal to $\ell_R(i)$. This implies $w_{\pi^{-1}(i)} \ge w_{\ell_R (i)}$ for all $i$. 

Now, observe that $R^{\le j} \setminus R^{\le j-1}$ is exactly the set of items which appear for the \emph{first} time in position $j$; thus $\ell_R(i) = j$ for all $i \in R^{\le j} \setminus R^{\le j-1}$. Additionally, $R^{\le 1} \subseteq R^{\le 2} \subseteq \dots \subseteq R^{\le k} \subseteq [K]$. Then, for any genre $g$, we have
\begin{align*}
\sum_{j=1}^{k} w_j \left(\sum_{i\in R^{\le j} \setminus R^{\le j-1}} q_i(g) \right) &= \sum_{j=1}^k \sum_{i \in R^{\le j} \setminus R^{\le j-1}} w_{\ell_R(i)} q_{i}(g) \\
&= \sum_{i \in R^{\le k}} w_{\ell_R(i)} q_{i}(g) \\ 
&\le \sum_{i \in [K]} w_{\pi^{-1}(i)} q_{i}(g) \\
&= \sum_{j=1}^k  w_j q_{\pi(j)} (g).
\end{align*}

Then, since $G$ is non-decreasing with respect to all $q(g)$, we conclude that $G(R) \le F_G(\pi)$. 
\end{proof}

\maxGf*
\begin{proof}
Let $\pi^* \coloneqq \argmax_{\pi} G(\pi)$, and define $$R^* \coloneqq \left\{ {((\pi^*)^{-1}(j),j)} \mid j \in [k] \right\}$$ as the set corresponding to the item-position pairs in $\pi^*$.

By construction, $F_G(R^*) = G(\pi^*)$, and $R^* \in \mathcal{I}$. Then, by monotonicity the the maximum value over independent sets is attained by a basis, and we get
\[\max_{R \in \mathcal{I}} F_G(R)  \ge F_G(R^*) = G(\pi^*) = \max G(\pi). \]
\end{proof}

Finally, one might wonder if it would be possible to take a solution obtained from the problem with allowed repeats via a partition matroid and convert it to a solution without repeats, simply by showing items in the order of the first time they appear. The barrier to simply taking the first occurrence of each repeated item is that this operation may destroy the submodular structure of the original function, so that the continuous greedy algorithm no longer provides a near-optimal approximation guarantee. That is, consider a submodular set function $f(S)$ on the set of item-position pairs, and the function $\bar f(S)$ that evaluates $f$ on the subset of $S$ corresponding to the first occurrence of each item. The following simple example illustrates that $\bar f(S)$ need not be submodular:

Denote the items as $a$ and $b$ and the positions as $1,2,3$, and explicitly define the value of $f$ on the following subsets:
    $$f(a,1)=4, f(a,2)=2, f(b,3)=2,$$
    $$f(a,1; b,3)=6, f(a,1; a,2) = 6, f(a,2;b,3)=3,$$
    $$f(a,1; a,2,b,3)=6.$$
It is straightforward to check that for these values, $f$ is submodular. Now observe that $\bar f$ takes values  
$$ \bar f(a,1;a,2;b,3) = f(a,1; b,3) = 6, $$
$$\bar f(a,1;a,2) = f(a,1) = 4 .$$

But then 
$$ \bar f(a,2;b,3) - \bar f(a,2) = 1 < 2 = \bar f(a,1;a,2;b,3) - \bar f(a,1;a,2),$$

So $\bar f$ is not submodular. Therefore, the new laminar matroid construction is indeed necessary to avoid this issue.

\subsection{Proofs from Section~\ref{sec:ordsub}} \label{sec:proofsordsub}

\proptwoapprox*
\begin{proof}
    Denote the sequence of length $k$ maximizing $f$ as $S = s_1 s_2 \dots s_k$ and the sequence of length $k$ maximizing the marginal increase at each step as $A = a_1 a_2 \dots a_k$. We write $S^j = s_j s_{j+1} \dots s_k$ to denote the suffix of $S$ starting at element $s_j$. 
    
    Let $OPT(k) = f(S)$, $ALG(k) = f(A)$, so that we seek to show that $ALG(k) \ge \frac{1}{2} OPT(k)$ for all $k$. We bound the performance of the greedy algorithm by comparing it to the optimal solution. The key insight is to ask the following question at each step: if we must remain committed to all the greedily chosen elements so far, but make the same choices as the optimum for the rest of the elements, how much have we lost? 
    
    To answer this question, we show via induction that for all $i$, $f(A_i || S^{i+1}) \ge OPT(k) - f(A_i).$
    
    The base case of $i=0$ is trivial, as $f(A_0 || S^1) = f(S) = OPT(k) \ge OPT(k) - f(A_0)$. So suppose the claim is true for some $i-1$, and observe that by ordered submodularity we have 
    \begin{align*}
        f(A_{i-1} || s_i) - f(A_{i-1}) &\ge f(A_{i-1} || s_i || S^{i+1}) - f(A_{i-1} || a_i || S^{i+1}) \\
        &= f(A_{i-1} || S^i) - f(A_i || S^{i+1}).
    \end{align*}
    Rearranging and applying the choice of the greedy algorithm, by which $f(A_i) \ge f(A_{i-1} || s_{i})$, gives 
    \begin{align}\label{eq:approx-ineq}
        f(A_{i} || S^{i+1}) \ge f(A_{i-1} || S^{i}) + f(A_{i-1}) - f(A_{i}).
    \end{align}
    Now applying the induction hypothesis yields
    \begin{align*}
        f(A_{i} || S^{i+1}) &\ge (OPT(k) - f(A_{i-1})) + f(A_{i-1}) - f(A_{i}) \\
        &= OPT(k) - f(A_{i}),
    \end{align*}
    completing the induction.
    
    Finally, taking $i=k$ in the claim gives $f(A) \ge OPT(k) - f(A) \implies ALG(k) \ge \frac{1}{2} OPT(k).$
\end{proof}

\ordsuboverlap*
\begin{proof}
    Consider the $\hat F_G$ function defined in Section \ref{sec:proofsdistributional}, and define the following two sets of item-position pairs:
    \begin{align*}
        R &= \{ (s_j, j) \mid j \in [i-1] \}, \\
        T &= \{ (s_j, j) \mid j \in [i-1] \cup [i+1,k] \}.
    \end{align*}
    Now observe that by construction,
    \begin{align*}
    G(s_1 \dots s_i) - G(s_1 \dots s_{i-1}) &= \hat F_G(R \cup \{(s_i, i)\}) - \hat F_G(R) \\
    &\ge \hat F_G(T \cup \{(s_i, i)\}) - \hat F_G(T) \\
    &\ge \hat F_G(T \cup \{(s_i, i)\}) - \hat F_G(T \cup \{(\bar{s}_i, i)\}) \\
    &= G(s_1 \dots s_i \dots s_k) \\
        &\quad- G(s_1 \dots
s_{i-1} \bar{s}_i s_{i+1} \dots s_k),
    \end{align*}
    where the first inequality is due to submodularity of $\hat F_G$ and the second inequality is due to monotonicity of $\hat F_G$ (since $G$ is MDR). Thus $G$ is ordered-submodular.
\end{proof}

\subsection{Proofs from Section~\ref{sec:cal2/3}} \label{sec:proofscal2/3}

\strongerineq*
\begin{proof}
Recall that we assume that $g' \ne g^*$ and $f(A_i || T^{i+1}) \le f(A_{i-1} || T^i)$. We seek to construct an assignment $\bar T^{i+1}$ of the remaining slots $i+1$ from $k$ such that $f(A_i || \bar T^{i+1}) \ge f(A_{i-1} || T^{i}) - \frac{1}{2} (f(A_i) - f(A_{i-1})$, starting from $T^{i+1}$ (the assignment formed by simply dropping $t_i$, the first item of $T^i$). Depending on the size of $\tau(g')$ relative to $w_i$, we modify $T^i$ with a different approach. 

\textbf{\emph{Case 1:} $\tau(g') \ge \frac{1}{2} w_i$.}

We construct the desired $\bar T^{i+1}$ by starting from $T^{i+1}$ and ``correcting'' by moving subsequent weights from $g'$ to $g^*$. We claim that we can always correct by an amount that is at least $\frac{w_i}{2}$ and at most $w_i$. 

\emph{Case 1a:} $\tau(g') \ge w_i$. Either the next weight in $g'$ is at least $\frac{w_i}{2}$ (but by definition at most $w_i$), so we can just move this weight and be done, or otherwise $\tau(g')$ is composed of many weights less than $\frac{w_i}{2}$. Adding these weights one at a time in descending order, we must be able to stop somewhere between $\frac{w_i}{2}$ and $w_i$ (if the total before adding a weight is less than $\frac{w_i}{2}$, and the total after adding that weight is greater than $w_i$, then that weight must have been greater than $\frac{w_i}{2}$, which is a contradiction). So, moving the weights up to this stopping point creates a total correction between $\frac{w_i}{2}$ and $w_i$.

\emph{Case 1b:} $\tau(g') \in \left[\frac{w_i}{2}, w_i\right)$. In this case, we simply correct by the entirety of $\tau(g')$ (that is, move \emph{all} subsequent weights from $g'$ to $g^*$). 

So in either subcase, we can correct by some amount $z \in \left[\frac{w_i}{2}, w_i\right]$.  Now, consider the function 
\begin{multline*}
c(x) = \sqrt{p(g')} \sqrt{\alpha(g') + \tau(g') + w_i - x} \\ + \sqrt{p(g^*)} \sqrt{\alpha(g^*) + \tau(g^*) + x},
\end{multline*}
representing the contribution from genres $g'$ and $g^*$ towards $f$, after correcting by $x$ (it will suffice to consider only these two genres, since all other genres are assigned the same slots in both $T^{i+1}$ and $\bar T^{i+1}$). Observe that $x=0$ corresponds to $f(A_i || T^{i+1})$ and $x=w_i$ corresponds to $f(A_{i-1} || T^i)$, so $c(0) \le c(w_i)$. Taking second derivatives, we get 
\begin{align*}
c''(x) &= -\frac{\sqrt{p(g')}}{4\left(\alpha(g') + \tau(g') + w_i - x\right)^{3/2}}  \\
    &\quad - \frac{\sqrt{p(g^*)}}{4\left(\alpha(g^*) + \tau(g^*) + x\right)^{3/2}} \\
&< 0,
\end{align*}
so $c$ is a positive, concave function of $x$. Suppose the (continuous) maximum of $c$ occurs at $x^*$. 

If $w_i \le x^*$, then first by monotonicity and then by concavity we have
\begin{align*}
    c(w_i) - c(z) &\le c(w_i) - c(w_i / 2) \\
    &\le \frac{1}{2} \left(c(w_i) - c(0)\right), \\
    \implies f(A_{i-1} || T^i) - f(A_{i} || \bar T^{i+1}) &\le \frac{1}{2} \left(f(A_{i-1} || T^i) - f(A_{i} || T^{i+1}) \right) \\
    &\le \frac{1}{2} \left(f(A_{i}) - f(A_{i-1}) \right),
\end{align*}
where the first line is because a correction that is at least $\frac{1}{2} w_i$ increases $f$ by at least half the amount that a full correction of $w_i$ would have achieved (as depicted in the figure below), and the final inequality is an application of inequality (\ref{eq:approx-ineq}) as re-proved earlier. 

\begin{figure}[ht] \label{fig:concaveAPP}
    \centering
    \begin{tikzpicture}[scale=0.8]
    \begin{axis}[
            ticks=none,
            axis lines=middle,
            xlabel={$x$},
            xmin=0, xmax=9,
            ymin=0, ymax=4.2,
            clip=false,
            xticklabels=\empty,
            yticklabels=\empty
        ]
        \addplot+[mark=none,samples=200,unbounded coords=jump, domain=0.5:8.5] {ln(x)+1.5};

        \draw[fill] (axis cs:1,1.5) circle [radius=1.5pt] node[below right] {};
        \draw[fill] (axis cs:4,2.8863) circle [radius=1.5pt] node[below right] {};
        \draw[fill] (axis cs:7,3.4459) circle [radius=1.5pt] node[below right] {};

        \draw (axis cs:1,0) node[below] {$0$};
        \draw (axis cs:4,0) node[below] {$z \ge \nicefrac{w_i}{2}$};
        \draw (axis cs:7,0) node[below] {$w_i$};
        \draw (axis cs:0,1.5) node[left] {$f(A_{i} || T^{i+1})$};
        \draw (axis cs:0,2.8863) node[left] {$f(A_{i} || \bar T^{i+1})$};
        \draw (axis cs:0,3.4459) node[left] {$f(A_{i-1} || T^i)$};
        
        \draw[dashed] (axis cs:1,0) -- (axis cs:1,1.5) -- (axis cs:0,1.5);
        \draw[dashed] (axis cs:4,0) -- (axis cs:4,2.8863) -- (axis cs:0,2.8863);
        \draw[dashed] (axis cs:7,0) -- (axis cs:7,3.4459) -- (axis cs:0,3.4459);
    \end{axis}
    \end{tikzpicture}
    \caption{Change in $f(A_i||\bar T^{i+1})$ as we move $x$ weight from $g'$ to $g^*$.}
\end{figure}
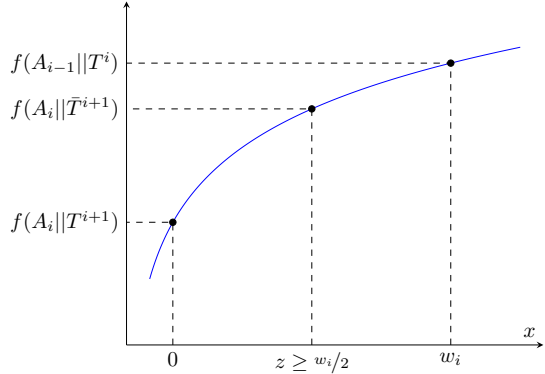

If $w_i > x^*$, then by continuity there must exist some $w' \in [0, x^*]$ such that $g(w') = g(w_i)$. Observe that $z \ge \frac{w_i}{2} > \frac{x^*}{2} \ge \frac{w'}{2}$, and then the same argument holds:
\begin{align*}
    c(w') - c(z) \le c(w') - c(w' / 2) &\le \frac{1}{2} \left(c(w') - c(0)\right) \\
    &= \frac{1}{2} \left(c(w_i) - g(0)\right), \\
    \implies f(A_{i-1} | T^i) - f(A_{i} | \bar T^{i+1} ) &\le \frac{1}{2} \left(f(A_{i}) - f(A_{i-1}) \right).
\end{align*}

Either way, rearranging establishes the desired inequality: 
\[f(A_i | \bar T^{i+1}) \ge f(A_{i-1} | T^i) - \frac{1}{2} \left(f(A_i) - f(A_{i-1})\right). \]

\textbf{\emph{Case 2:} $\tau(g') < \frac{1}{2} w_i$.}

Since $\tau(g')$ is small, we do not have much that we can move to compensate. However, the fact that $\tau(g')$ is so small means that even without correcting, $f(A_i || T^{i+1})$ is not as bad as inequality (\ref{eq:approx-ineq}) suggests it could be.

Again noticing that we only need to focus on the difference in contributions to $f$ from $g'$ and $g^*$ (since all other genre assignments are unchanged), we have
\begin{align}
    \begin{split}
     &\frac{f(A_{i-1} || T^i) - f(A_i || T^{i+1})}{f(A_i) - f(A_{i-1})} \\
     &= \frac{\sqrt{p(g')} \left(\sqrt{\alpha(g') + \tau(g')} - \sqrt{\alpha(g') + w_i + \tau(g')}\right) }{\sqrt{p(g')} \left(\sqrt{\alpha(g') + w_i} - \sqrt{\alpha(g')}\right)} \\
    & \qquad + \frac{\sqrt{p(g^*)} \left(\sqrt{\alpha(g^*) + w_i + \tau(g^*)} - \sqrt{\alpha(g^*) + \tau(g^*)}\right)}{\sqrt{p(g')} \left(\sqrt{\alpha(g') + w_i} - \sqrt{\alpha(g')}\right)} \\
    \end{split} \\
    \begin{split}
    & \le \frac{\sqrt{p(g')} \left(\sqrt{\alpha(g') + \tau(g')} - \sqrt{\alpha(g') + w_i + \tau(g')}\right) }{\sqrt{p(g')} \left(\sqrt{\alpha(g') + w_i} - \sqrt{\alpha(g')}\right)} \\
    & \qquad + \frac{\sqrt{p(g^*)} \left(\sqrt{\alpha(g^*) + w_i} - \sqrt{\alpha(g^*)}\right)}{\sqrt{p(g')} \left(\sqrt{\alpha(g') + w_i} - \sqrt{\alpha(g')}\right)} \\
    \end{split}\\
    \begin{split}
    & \le \frac{\sqrt{p(g')} \left(\sqrt{\alpha(g') + \tau(g')} - \sqrt{\alpha(g') + w_i + \tau(g')}\right) }{\sqrt{p(g')} \left(\sqrt{\alpha(g') + w_i} - \sqrt{\alpha(g')}\right)} \\
    & \qquad + \frac{\sqrt{p(g')} \left(\sqrt{\alpha(g') + w_i} - \sqrt{\alpha(g')}\right)}{\sqrt{p(g')} \left(\sqrt{\alpha(g') + w_i} - \sqrt{\alpha(g')}\right)}, \\
    \end{split}
\end{align}

% \begin{align}
%     \begin{split}
%      \frac{f(A_{i-1} || T^i) - f(A_i || T^{i+1})}{f(A_i) - f(A_{i-1})} &= \frac{\sqrt{p(g')} \left(\sqrt{\alpha(g') + \tau(g')} - \sqrt{\alpha(g') + w_i + \tau(g')}\right) }{\sqrt{p(g')} \left(\sqrt{\alpha(g') + w_i} - \sqrt{\alpha(g')}\right)} \\
%     & \qquad + \frac{\sqrt{p(g^*)} \left(\sqrt{\alpha(g^*) + w_i + \tau(g^*)} - \sqrt{\alpha(g^*) + \tau(g^*)}\right)}{\sqrt{p(g')} \left(\sqrt{\alpha(g') + w_i} - \sqrt{\alpha(g')}\right)} \\
%     \end{split} \\
%     \begin{split}
%     \color{white} \frac{f(A_{i-1} || T^i) - f(A_i || T^{i+1})}{f(A_i) - f(A_{i-1})}& \color{black} \le \frac{\sqrt{p(g')} \left(\sqrt{\alpha(g') + \tau(g')} - \sqrt{\alpha(g') + w_i + \tau(g')}\right) }{\sqrt{p(g')} \left(\sqrt{\alpha(g') + w_i} - \sqrt{\alpha(g')}\right)} \\
%     & \qquad + \frac{\sqrt{p(g^*)} \left(\sqrt{\alpha(g^*) + w_i} - \sqrt{\alpha(g^*)}\right)}{\sqrt{p(g')} \left(\sqrt{\alpha(g') + w_i} - \sqrt{\alpha(g')}\right)} \\
%     \end{split}\\
%     \begin{split}
%     \color{white}\frac{f(A_{i-1} || T^i) - f(A_i || T^{i+1})}{f(A_i) - f(A_{i-1})} & \color{black} \le \frac{\sqrt{p(g')} \left(\sqrt{\alpha(g') + \tau(g')} - \sqrt{\alpha(g') + w_i + \tau(g')}\right) }{\sqrt{p(g')} \left(\sqrt{\alpha(g') + w_i} - \sqrt{\alpha(g')}\right)} \\
%     & \qquad + \frac{\sqrt{p(g')} \left(\sqrt{\alpha(g') + w_i} - \sqrt{\alpha(g')}\right)}{\sqrt{p(g')} \left(\sqrt{\alpha(g') + w_i} - \sqrt{\alpha(g')}\right)}, \\
%     \end{split}
% \end{align}
where (4) is due to concavity of the square root (so we take $\tau(g^*) = 0$) and (5) is due to the definition of greedy choosing $g'$ over $g^*$. Continuing to simplify, we have
\begin{align}
    \begin{split}
    &\frac{f(A_{i-1} || T^i) - f(A_i || T^{i+1})}{f(A_i) - f(A_{i-1})} \\
    &\le 1 - \frac{\sqrt{\alpha(g') + w_i + \tau(g')} - \sqrt{\alpha(g') + \tau(g')}}{\sqrt{\alpha(g') + w_i} - \sqrt{\alpha(g')}} 
    \end{split} \\
    &\le 1 - \frac{\sqrt{\alpha(g') + 3w_i/2} - \sqrt{\alpha(g') + w_i/2}}{\sqrt{\alpha(g') + w_i} - \sqrt{\alpha(g')}} \\
    &\le 1 - \frac{\sqrt{3w_i/2} - \sqrt{w_i/2}}{\sqrt{w_i} - \sqrt{0}} \\
    &= 1 - \sqrt{2 - \sqrt{3}} \approx 0.48 < \frac{1}{2}.
\end{align}
Here, (6) is a simplified form of (4), and (7) is again due to concavity (so we take $\tau(g') = \frac{w_i}{2})$. Then, $\frac{\sqrt{x+3w/2} - \sqrt{x+w/2}}{\sqrt{x+w} - \sqrt{x}}$ is an increasing function of $x$, since its first derivative with respect to $x$ is
\begin{align*}
    &\frac{\frac{\sqrt{x+3w/2} - \sqrt{x + w/2}}{\sqrt{x}{\sqrt{x+w}}} + \frac{1}{\sqrt{x + 3w/2}} - \frac{1}{\sqrt{x+w/2}}}{2(\sqrt{x + w} - \sqrt{x})} \\
    &= \frac{\frac{\sqrt{x+3w/2} - \sqrt{x + w/2}}{\sqrt{x}{\sqrt{x+w}}} + \frac{\sqrt{x+w/2} - \sqrt{x+3w/2}}{\sqrt{x+w/2} \sqrt{x + 3w/2}}}{2(\sqrt{x + w} - \sqrt{x})} \\
    &= \frac{\sqrt{x+3w/2} - \sqrt{x+w/2}}{2(\sqrt{x+w} - \sqrt{x})} \left(\frac{1}{\sqrt{x}{\sqrt{x+w}}} - \frac{1}{\sqrt{x+w/2}\sqrt{x+3w/2}} \right) \\
    &> 0,
\end{align*}
so we may take $\alpha(g') = 0$ to get (8). Finally, rearranging (9) results in 
\begin{align*}
    f(A_{i-1} || T^i) - f(A_{i}  || T^{i+1}) &\le \frac{1}{2} \left(f(A_i) - f(A_{i-1})\right) \\
    \implies f(A_{i}  || T^{i+1})&\ge f(A_{i-1} || T^i) - \frac{1}{2} \left(f(A_i) - f(A_{i-1})\right),
\end{align*}
so we may just take $T^{i+1}$ to be our $\bar T^{i+1}$ without any correction at all, which establishes the desired \[f(A_i | \bar T^{i+1}) \ge f(A_{i-1} | T^i) - \frac{1}{2} \left(f(A_i) - f(A_{i-1})\right). \]
\end{proof}

\twothirds*
\begin{proof}
We define $S^{(1)}$ to be the optimal sequence $S$, and using Lemma~\ref{lem:ineq1/2} with $T^i = S^{(i)}$, we define inductively $S^{(i+1)} = \bar T^{i+1}$ (the existence of which is stated by the lemma).

We show via induction that for all $i$, $$f(A_i || S^{(i+1)}) \ge OPT(k) - \frac{1}{2}f(A_i).$$ 

    For the base case of $i=0$, we have $f(A_0 || S^{(1)}) = f(S) = OPT(k) \ge OPT(k) - \frac{1}{2}f(A_0)$. So suppose the claim holds for some $i$, and observe that by Lemma~\ref{lem:ineq1/2} and the fact that $f(A_{i+1}) \ge f(A_i || s_{i+1})$ by definition of the greedy algorithm, we have
    \begin{align*}
        f(A_{i+1} || S^{(i+2)}) &\ge f(A_i || S^{(i+1)}) - \frac{1}{2} (f(A_{i+1}) - f(A_{i})) \\
        &\ge OPT(k) - \frac{1}{2} f(A_i) - \frac{1}{2} (f(A_{i+1}) - f(A_{i})) \\
        &= OPT(k) - \frac{1}{2} f(A_{i+1}),
    \end{align*}
    completing the induction. Finally, setting $i=k$ establishes $ALG(k) \ge \frac{2}{3} OPT(k)$.
\end{proof}

\section{Mixed sign issues with KL divergence-based overlap measures} \label{sec:bad_netflix}
A natural hope might be to use the KL divergence as a calibration heuristic, as it is perhaps the most commonly used statistical divergence. Unfortunately, the KL divergence cannot be used directly because it is unbounded; our translation to the distance-based overlap measure is also not well-defined on the KL divergence for the same reason. In~\cite{Steck18} an alternative transformation is proposed, yielding the following calibration heuristic: 
$$f(\mathcal{I}) = \sum_g p(g|u) \log \sum_{i \in \mathcal{I}} w_{r(i)} \tilde{q} (g|i).$$
However, this objective function has inconsistent sign, depending on how the recommendation weights are chosen (and we note that Steck does not set any constraints on the weights), and consequently the greedy choice can be far from optimal. In fact, we show that the greedy solution can be negative, while the optimum is positive. So the KL divergence (and variants of it) are not conducive to multiplicative approximation guarantees for the calibration problem.

Suppose there are $4$ genres ($g_k$ for $k=1,2,3,4$), $2$ movies ($i_\ell$ for $\ell=1,2$), and $1$ user ($u$), and that we seek a recommendation list of length $2$ with weights $w_1 > w_2 = 1$. For simplicity of notation, we denote $p(g_k|u)$ as $p_k$. Suppose further that the movies have the following distributions over genres for some $\varepsilon \in (0,\frac{1}{3})$:

\begin{center}
\begin{tabular}{ p{4cm} p{4cm} }
$\tilde{q}(g_1 | i_1) = \frac{1}{2}(1-\varepsilon)$ & $\tilde{q}(g_1 | i_2) = \frac{1}{2}(1-\varepsilon)$ \\[6pt]
$\tilde{q}(g_2 | i_1) = \frac{1}{4}(1-\varepsilon)$ & $\tilde{q}(g_2 | i_2) = \frac{1}{2}(1-\varepsilon)$ \\[6pt]
$\tilde{q}(g_3 | i_1) = \frac{1}{4}(1-\varepsilon)$ & $\tilde{q}(g_3 | i_2) = \frac{\varepsilon}{2}$ \\[6pt]
$\tilde{q}(g_4 | i_1) = \varepsilon$ & $\tilde{q}(g_4 | i_2) = \frac{\varepsilon}{2}$ 
\end{tabular}
\end{center}

Finally, suppose the parameters are such that $$p_3 \log\left(\frac{1-\varepsilon}{2\varepsilon} \right) = (p_2-p_4) \log \left(\frac{2w_1 + 1}{w_1+2} \right).$$

Then, observe that 
\begin{align*}
    f(i_1 i_2) - f(i_2 i_1) &= p_2 \log \left( \frac{\frac{w_1}{4}(1-\varepsilon) + \frac{1}{2}(1-\varepsilon)}{\frac{w_1}{2}(1-\varepsilon) + \frac{1}{4}(1-\varepsilon)} \right) + p_3 \log\left( \frac{\frac{w_1}{4}(1-\varepsilon) + \frac{\varepsilon}{2}}{\frac{w_1\varepsilon}{2} + \frac{1}{4}(1-\varepsilon)} \right) + p_4 \log \left(\frac{w_1\varepsilon + \frac{\varepsilon}{2}}{\frac{w_1\varepsilon}{2} + \varepsilon} \right) \\
    &= p_2 \log\left(\frac{w_1 + 2}{2w_1 + 1} \right) + p_3 \left( \frac{w_1(1-\varepsilon) + 2\varepsilon}{2w_1\varepsilon + 1-\varepsilon} \right) + p_4 \log \left( \frac{2w_1 + 1}{w_1 + 2} \right) \\
    &= p_3 \left( \frac{w_1(1-\varepsilon) + 2\varepsilon}{2w_1\varepsilon + 1-\varepsilon} \right) + (p_4-p_2) \log \left( \frac{2w_1 + 1}{w_1 + 2} \right).
\end{align*}
We can verify that for $\varepsilon < \frac{1}{3}$, we have $\frac{w_1(1-\varepsilon) + 2\varepsilon}{2w_1\varepsilon + 1-\varepsilon} < \frac{1-\varepsilon}{2\varepsilon}$, thus 
\begin{align*}
    f(i_1 i_2) - f(i_2 i_1) &< p_3 \left( \frac{1-\varepsilon}{2\varepsilon}\right) + (p_4-p_2) \log \left( \frac{2w_1 + 1}{w_1 + 2} \right) = 0\\
    \implies f(i_1 i_2) &< f(i_2 i_1).
\end{align*}
That is, the optimal recommendation list ranks $i_2$ first, then $i_1$ second.

However, we also have
\begin{align*}
    f(i_1) - f(i_2) &= p_2 \log \left( \frac{\frac{w_1}{4}(1-\varepsilon)}{\frac{w_1}{2}(1-\varepsilon) } \right) + p_3 \log\left( \frac{\frac{w_1}{4}(1-\varepsilon)}{\frac{w_1\varepsilon}{2}} \right) + p_4 \log \left(\frac{w_1\varepsilon}{\frac{w_1\varepsilon}{2}} \right) \\
    &= p_2 \log\left(\frac{1}{2} \right) + p_3 \left( \frac{1-\varepsilon}{2\varepsilon} \right) + p_4 \log \left(2\right) \\
    &= p_3 \left( \frac{1-\varepsilon}{2\varepsilon} \right) + (p_4-p_2) \log \left(2 \right).
\end{align*}
Since $w_1 > 1$, we have $\frac{2w_1 + 1}{w_1+2} < 2$, thus
\begin{align*}
    f(i_1) - f(i_2) &> p_3 \left( \frac{1-\varepsilon}{2\varepsilon}\right) + (p_4-p_2) \log \left( \frac{2w_1 + 1}{w_1 + 2} \right) = 0\\
    \implies f(i_1) &> f(i_2).
\end{align*}
That is, the greedy algorithm will first choose $i_1$ instead of $i_2$, thereby constructing a suboptimal list.

Now, we compute $ALG = f(i_1i_2)$ and $OPT = f(i_2i_1)$ for the following set of parameters: $p_1 = 0.05, p_2 = 0.9, p_3=p_4 = 0.025$, $\varepsilon = 10^{-10}$, varying $w_1 > 1$. 

\begin{table}[h!]
  \caption{$ALG$ versus $OPT$ for varying values of $w_1$}
  \label{netflixtable}
  \centering
  \begin{tabular}{crr}
    \toprule
    $w_1$     & $ALG$     & $OPT$ \\
    \midrule
1.1 & -0.823134 & -0.797737 \\
1.5 & -0.691859 & -0.585156 \\
2 & -0.549794 & -0.371873 \\
3.5 & -0.201250 & 0.114023 \\
5 & 0.0311358 & 0.386387 \\
10 & 0.580034 & 1.01213 \\
100 & 2.73099 & 3.20940 \\
    \bottomrule
  \end{tabular}
\end{table}

We now observe that the function does not have consistent sign; $ALG$ and $OPT$ are negative for lower values of $w_1$ and positive for higher values of $w_1$. This is because the $\tilde{q}(g|i)$'s represent a probability distribution and are thus less than $1$, so when the weights are small we take the logarithm of a number less than $1$, so the function is negative; when the weights are sufficiently large, then the inner summand exceeds $1$ and the function becomes positive.

It is unclear how we should think about approximation when the value of a function is not always positive or negative --- for instance, the approximation ratio $ALG/OPT$ is meaningless, especially considering that $ALG$ and $OPT$ may have opposite signs (such as when $w_1 = 3.5$). So if the simple greedy algorithm is not always optimal, but we have no consistent way of comparing its performance with the optimal solution, then it becomes very difficult to understand the maximization (or approximate maximization) of this specific form of the calibration heuristic.

\section{Varying sequential dependencies in calibration}\label{sec:calibration_seqdep}

In Section~\ref{sec:relatedwork}, we described earlier formalisms of sequential submodularity that rely on postfix monotonicity and argued that many natural ordering problems, including the calibration objective function, are not postfix monotone. A different line of papers discussed encodes sequences using DAGs and hypergraphs. Now, we show that this formalism also does not capture the rank-based sequential dependencies that we desire.

We present a simple instance of the calibration problem which hints at the potential intricacies of sequential dependencies. Suppose there are just $2$ genres ($g_1$ and $g_2$), $4$ movies ($i_1$, $i_2$, $i_3$, $i_4$), and $1$ user ($u$). Say that the target distribution is $p(g_1 | u) = p(g_2 | u) = 0.5$, and the weights of the recommended items are $w_1 = 0.5, w_2 = 0.3, w_3 = 0.2$. Suppose further that the movies have genre distributions as follows:

\begin{center}
\begin{tabular}{ p{3cm} p{3cm} }
$p(g_1 | i_1) = 0.4$, & $p(g_2 | i_1) = 0.6$ \\[6pt]
$p(g_1 | i_2) = 0.8$, & $p(g_2 | i_2) = 0.2$ \\[6pt]
$p(g_1 | i_3) = 1$, & $p(g_2 | i_3) = 0$ \\[6pt]
$p(g_1 | i_4) = 0$, & $p(g_2 | i_4) = 1$.
\end{tabular}
\end{center}

Our heuristic for measuring calibration is the overlap measure $G(p,q) = \sum_g \sqrt{p(g|u) \cdot q(g|u)}$. We now consider a few different recommended lists as input to the overlap measure:
\begin{align*}
    f(i_3 i_1 i_2) = G(p, (0.78,0.22)) \approx 0.956 \\
    f(i_3 i_2 i_1) = G(p, (0.82,0.18)) \approx 0.940 \\
    f(i_4 i_1 i_2) = G(p, (0.28,0.72)) \approx 0.974 \\
    f(i_4 i_2 i_1) = G(p, (0.32,0.68)) \approx 0.983 
\end{align*}

Here, we see that $f(i_3 i_1 i_2) > f(i_3 i_2 i_1)$, but $f(i_4 i_1 i_2) < f(i_4 i_2 i_1)$. So it is not always inherently better to rank $i_1$ before $i_2$ or $i_2$ before $i_1$; the optimal ordering is dependent on the context of the rest of the recommended list. Thus this very natural problem setting cannot be satisfactorily encoded by the DAG or hypergraph models of~\cite{Tschiatschek2017} and~\cite{Mitrovic18}.

\section{Approximate optimization with noisy parameters}\label{sec:noisy_approx}
\def\eps{\varepsilon}
\def\hatp{\hat{p}}
\def\hatw{\hat{w}}
\def\hatf{\hat{f}}

In our optimization problem with discrete genres, a user has
a target probability $p(g)$ for each genre $g$, and a weight
$w_i$ that they place on position $i$ in a sequence of recommendations.
An assignment of genres to slots in the recommendation list of length $k$ is
represented by a sequence $S$ of length $k$, where
$s_i = g$ means that the $i^{\rm th}$ position in the list is assigned 
genre $g$.
We seek to maximize the objective function 
\begin{equation*}
% \label{eq:discrete-Hellinger}
f(S) = \sum_{g} \sqrt{p(g)} \sqrt{\sum_{i\in [k] : s_i = g} w_i}.  
\end{equation*}

Now, suppose we only know the user's genre probabilities and 
decaying attention weights approximately; we have $\hatp(g)$ as an 
approximate value for $p(g)$, and we have $\hatw_i$ as an approximate
value for $w_i$.
Suppose these are approximate in the following sense: for some small
positive constant $\eps > 0$, we have
$$\frac{p(g)}{(1 + \eps)} \leq \hatp(g) \leq (1 + \eps) p(g)$$
for all $g$, 
and similarly
$$\frac{w_i}{(1 + \eps)} \leq \hatw_i \leq (1 + \eps) w_i$$
for all $i$.

Given these approximate parameters, suppose we try to optimize with them;
then we are in fact optimizing the function 
$$\hatf(S) = \sum_{g} \sqrt{\hatp(g)} \sqrt{\sum_{i\in [k] : s_i = g} \hatw_i}.$$

We now show that an approximately optimal solution with respect to $\hatf$
is also approximately optimal (with a slightly worse guarantee) with
respect to $f$.
To see this, first 
observe that for any sequence $S$, we have
\begin{eqnarray*}
\hatf(S) & = & \sum_{g} \sqrt{\hatp(g)} \sqrt{\sum_{i\in [k] : s_i = g} \hatw_i} \\
& \leq & \sum_{g} \sqrt{(1 + \eps) p(g)} \sqrt{\sum_{i\in [k] : s_i = g} (1 + \eps) w_i} \\
& = & \sum_{g} \sqrt{(1 + \eps) p(g)} \sqrt{(1 + \eps) \sum_{i\in [k] : s_i = g} w_i} \\
& = & (1 + \eps) \sum_{g} \sqrt{p(g)} \sqrt{\sum_{i\in [k] : s_i = g} w_i} \\
& = & (1 + \eps) f(S),
\end{eqnarray*}
from which it follows that 
\begin{equation}
f(S) \geq \frac{1}{(1 + \eps)} \hatf(S),
\label{eq:f}
\end{equation}
and similarly 
\begin{eqnarray*}
\hatf(S) & = & \sum_{g} \sqrt{\hatp(g)} \sqrt{\sum_{i\in [k] : s_i = g} \hatw_i} \\
& \geq & \sum_{g} \sqrt{\frac{p(g)}{(1 + \eps)}} \sqrt{\sum_{i\in [k] : s_i = g} \frac{w_i}{(1 + \eps)}} \\
& = & \sum_{g} \sqrt{\frac{p(g)}{(1 + \eps)}} \sqrt{\frac{1}{(1 + \eps)} \sum_{i\in [k] : s_i = g} w_i} \\
& = & \frac{1}{(1 + \eps)} \sum_{g} \sqrt{p(g)} \sqrt{\sum_{i\in [k] : s_i = g} w_i} \\
& = & \frac{1}{(1 + \eps)} f(S),
\end{eqnarray*}
which we summarize as 
\begin{equation}
\hatf(S) \geq \frac{1}{(1 + \eps)} f(S).
\label{eq:hatf}
\end{equation}

Now, let $S^*$ be a sequence that optimizes $f$, and let 
$S^o$ be a sequence that optimizes $\hatf$.
For some $\alpha < 1$, 
suppose we use an $\alpha$-approximation algorithm with respect to the
data we have (which serves to define $\hatf$), 
obtaining a solution $S'$ that satisfies the guarantee
$$\hatf(S') \geq \alpha \hatf(S^o).$$
Using the inequalities derived above, we now have
$$f(S')  \geq  \frac{1}{(1 + \eps)} \cdot \hatf(S') 
 \geq  \frac{\alpha}{(1 + \eps)} \cdot \hatf(S^o) 
 \geq  \frac{\alpha}{(1 + \eps)} \cdot \hatf(S^*) 
 \geq  \frac{\alpha}{(1 + \eps)^2} \cdot f(S^*),
$$
where the first inequality follows from (\ref{eq:f}), the second
inequality follows from the $\alpha$-approximation guarantee,
the third inequality follows from the optimality of $S^o$ for
the function $\hatf$, and the fourth inequality follows from
(\ref{eq:hatf}).

It follows that if we have an $\alpha$-approximation with respect
to a set of parameters that are estimated to within a multiplicative
error of $(1 + \eps)$ in each direction, then the resulting solution
is an $\displaystyle{\frac{\alpha}{(1 + \eps)^2}}$-approximation
with respect to the true optimization function.

\section{Computational experiments for the greedy algorithm}\label{sec:comp_experiments}
We computationally investigated the performance of the standard greedy algorithm, measured by the squared Hellinger overlap, across randomly generated problem instances of both the distributional and discrete models (user preferences, movie genres, and position weights). Across varying numbers of slots, movies, and genres, the greedy algorithm consistently performed very close to optimal; below, we present the average- and worst-case $ALG/OPT$ approximation ratios for each model across $N=10000$ trials of each parameter setting.

\begin{table}[h!]

\caption{$ALG/OPT$ for numerical simulations of the greedy algorithm, $N=10000$ trials}
  \label{exptable}

\centering

\begin{tabular}{|c|c|c|rr|rr|}
\hline
\multirow{2}{*}{\textbf{\# slots}} & \multirow{2}{*}{\textbf{\# movies}} & \multirow{2}{*}{\textbf{\# genres}} & \multicolumn{2}{c|}{\textbf{Distributional $ALG/OPT$}}                        & \multicolumn{2}{c|}{\textbf{Discrete $ALG/OPT$}}                              \\ \cline{4-7} 
                                   &                                     &                                     & \multicolumn{1}{c|}{\textbf{Average}} & \multicolumn{1}{c|}{\textbf{Worst}} & \multicolumn{1}{c|}{\textbf{Average}} & \multicolumn{1}{c|}{\textbf{Worst}} \\ \hline
2                                  & 3                                   & 3                                   & \multicolumn{1}{r|}{0.999597}         & 0.926542                            & \multicolumn{1}{r|}{1.0}              & 1.0                                 \\ \hline
2                                  & 3                                   & 4                                   & \multicolumn{1}{r|}{0.999516}         & 0.953029                            & \multicolumn{1}{r|}{1.0}              & 1.0                                 \\ \hline
2                                  & 3                                   & 5                                   & \multicolumn{1}{r|}{0.999580}         & 0.965402                            & \multicolumn{1}{r|}{1.0}              & 1.0                                 \\ \hline
2                                  & 4                                   & 3                                   & \multicolumn{1}{r|}{0.999358}         & 0.950738                            & \multicolumn{1}{r|}{1.0}              & 1.0                                 \\ \hline
2                                  & 4                                   & 4                                   & \multicolumn{1}{r|}{0.999280}         & 0.931698                            & \multicolumn{1}{r|}{1.0}              & 1.0                                 \\ \hline
2                                  & 4                                   & 5                                   & \multicolumn{1}{r|}{0.999342}         & 0.955211                            & \multicolumn{1}{r|}{1.0}              & 1.0                                 \\ \hline
3                                  & 4                                   & 3                                   & \multicolumn{1}{r|}{0.999418}         & 0.950166                            & \multicolumn{1}{r|}{0.998737}         & 0.951152                            \\ \hline
3                                  & 4                                   & 4                                   & \multicolumn{1}{r|}{0.999403}         & 0.966373                            & \multicolumn{1}{r|}{0.999096}         & 0.948820                            \\ \hline
3                                  & 4                                   & 5                                   & \multicolumn{1}{r|}{0.999423}         & 0.968850                            & \multicolumn{1}{r|}{0.999338}         & 0.949216                            \\ \hline
3                                  & 5                                   & 3                                   & \multicolumn{1}{r|}{0.999353}         & 0.957874                            & \multicolumn{1}{r|}{0.998893}         & 0.948382                            \\ \hline
3                                  & 5                                   & 4                                   & \multicolumn{1}{r|}{0.999275}         & 0.957349                            & \multicolumn{1}{r|}{0.999376}         & 0.944912                            \\ \hline
3                                  & 5                                   & 5                                   & \multicolumn{1}{r|}{0.999219}         & 0.968183                            & \multicolumn{1}{r|}{0.999563}         & 0.951208                            \\ \hline
3                                  & 6                                   & 3                                   & \multicolumn{1}{r|}{0.999340}         & 0.972373                            & \multicolumn{1}{r|}{0.999090}         & 0.948851                            \\ \hline
3                                  & 6                                   & 4                                   & \multicolumn{1}{r|}{0.999155}         & 0.963547                            & \multicolumn{1}{r|}{0.999509}         & 0.946946                            \\ \hline
3                                  & 6                                   & 5                                   & \multicolumn{1}{r|}{0.999104}         & 0.958884                            & \multicolumn{1}{r|}{0.999760}         & 0.956849                            \\ \hline
4                                  & 5                                   & 3                                   & \multicolumn{1}{r|}{0.999395}         & 0.970426                            & \multicolumn{1}{r|}{0.997124}         & 0.932464                            \\ \hline
4                                  & 5                                   & 4                                   & \multicolumn{1}{r|}{0.999357}         & 0.965723                            & \multicolumn{1}{r|}{0.997667}         & 0.941692                            \\ \hline
4                                  & 5                                   & 5                                   & \multicolumn{1}{r|}{0.999387}         & 0.976056                            & \multicolumn{1}{r|}{0.998221}         & 0.942862                            \\ \hline
4                                  & 6                                   & 3                                   & \multicolumn{1}{r|}{0.999393}         & 0.968958                            & \multicolumn{1}{r|}{0.996978}         & 0.941517                            \\ \hline
4                                  & 6                                   & 4                                   & \multicolumn{1}{r|}{0.999310}         & 0.963804                            & \multicolumn{1}{r|}{0.997977}         & 0.941866                            \\ \hline
4                                  & 6                                   & 5                                   & \multicolumn{1}{r|}{0.999290}         & 0.963948                            & \multicolumn{1}{r|}{0.998682}         & 0.939017                            \\ \hline
4                                  & 7                                   & 3                                   & \multicolumn{1}{r|}{0.999395}         & 0.967712                            & \multicolumn{1}{r|}{0.996996}         & 0.943542                            \\ \hline
4                                  & 7                                   & 4                                   & \multicolumn{1}{r|}{0.999281}         & 0.972192                            & \multicolumn{1}{r|}{0.998236}         & 0.949526                            \\ \hline
4                                  & 7                                   & 5                                   & \multicolumn{1}{r|}{0.999212}         & 0.978790                            & \multicolumn{1}{r|}{0.998887}         & 0.950546                            \\ \hline
4                                  & 8                                   & 3                                   & \multicolumn{1}{r|}{0.999405}         & 0.967539                            & \multicolumn{1}{r|}{0.996947}         & 0.935040                            \\ \hline
4                                  & 8                                   & 4                                   & \multicolumn{1}{r|}{0.999214}         & 0.973646                            & \multicolumn{1}{r|}{0.998373}         & 0.947801                            \\ \hline
4                                  & 8                                   & 5                                   & \multicolumn{1}{r|}{0.999172}         & 0.979681                            & \multicolumn{1}{r|}{0.999183}         & 0.937648                            \\ \hline
5                                  & 6                                   & 3                                   & \multicolumn{1}{r|}{0.999419}         & 0.975819                            & \multicolumn{1}{r|}{0.995778}         & 0.937196                            \\ \hline
5                                  & 6                                   & 4                                   & \multicolumn{1}{r|}{0.999370}         & 0.975623                            & \multicolumn{1}{r|}{0.996337}         & 0.948621                            \\ \hline
5                                  & 6                                   & 5                                   & \multicolumn{1}{r|}{0.999400}         & 0.979225                            & \multicolumn{1}{r|}{0.996893}         & 0.938278                            \\ \hline
5                                  & 7                                   & 3                                   & \multicolumn{1}{r|}{0.999435}         & 0.981923                            & \multicolumn{1}{r|}{0.995790}         & 0.939833                            \\ \hline
5                                  & 7                                   & 4                                   & \multicolumn{1}{r|}{0.999348}         & 0.979790                            & \multicolumn{1}{r|}{0.996356}         & 0.940764                            \\ \hline
5                                  & 7                                   & 5                                   & \multicolumn{1}{r|}{0.999344}         & 0.982033                            & \multicolumn{1}{r|}{0.997167}         & 0.948319                            \\ \hline
5                                  & 8                                   & 3                                   & \multicolumn{1}{r|}{0.999456}         & 0.979707                            & \multicolumn{1}{r|}{0.995950}         & 0.941829                            \\ \hline
5                                  & 8                                   & 4                                   & \multicolumn{1}{r|}{0.999309}         & 0.979906                            & \multicolumn{1}{r|}{0.996429}         & 0.939618                            \\ \hline
5                                  & 8                                   & 5                                   & \multicolumn{1}{r|}{0.999301}         & 0.978087                            & \multicolumn{1}{r|}{0.997418}         & 0.945195                            \\ \hline
5                                  & 9                                   & 3                                   & \multicolumn{1}{r|}{0.999473}         & 0.970947                            & \multicolumn{1}{r|}{0.996017}         & 0.942974                            \\ \hline
5                                  & 9                                   & 4                                   & \multicolumn{1}{r|}{0.999334}         & 0.981755                            & \multicolumn{1}{r|}{0.996473}         & 0.953384                            \\ \hline
5                                  & 9                                   & 5                                   & \multicolumn{1}{r|}{0.999258}         & 0.975289                            & \multicolumn{1}{r|}{0.997628}         & 0.948620                            \\ \hline
5                                  & 10                                  & 3                                   & \multicolumn{1}{r|}{0.999504}         & 0.983135                            & \multicolumn{1}{r|}{0.996023}         & 0.946908                            \\ \hline
5                                  & 10                                  & 4                                   & \multicolumn{1}{r|}{0.999294}         & 0.970452                            & \multicolumn{1}{r|}{0.996484}         & 0.948485                            \\ \hline
5                                  & 10                                  & 5                                   & \multicolumn{1}{r|}{0.999249}         & 0.977041                            & \multicolumn{1}{r|}{0.997904}         & 0.952026                            \\ \hline
\end{tabular}
\end{table}

%\end{document}

\end{document}